\documentclass[11pt]{article}
\usepackage[margin=1.1in]{geometry}
\usepackage{amsmath,amssymb,amsthm,mathtools}
\usepackage{microtype}
\usepackage{booktabs,array}
\usepackage{enumitem}
\usepackage{needspace}
\usepackage{placeins}
\usepackage{tikz}
\usetikzlibrary{arrows.meta,positioning}
\usepackage[colorlinks=true,linkcolor=blue,citecolor=blue,urlcolor=blue,backref=page]{hyperref}
\usepackage[nameinlink,noabbrev]{cleveref}

\newtheorem{theorem}{Theorem}[section]
\newtheorem{lemma}[theorem]{Lemma}
\theoremstyle{definition}
\newtheorem{definition}[theorem]{Definition}

\newcommand{\Tow}{\operatorname{Tow}}
\newcommand{\E}{\mathbb{E}}
\newcommand{\mcalf}{\mathcal{F}}
\newcommand{\mcala}{\mathcal{A}}

\newcommand{\Tr}{{\textnormal{Tr}}}

\Crefname{theorem}{Theorem}{Theorems}
\Crefname{lemma}{Lemma}{Lemmas}
\Crefname{definition}{Definition}{Definitions}
\Crefname{remark}{Remark}{Remarks}
\Crefname{theorem}{Theorem}{Theorems}
\Crefname{lemma}{Lemma}{Lemmas}
\Crefname{definition}{Definition}{Definitions}
\Crefname{remark}{Remark}{Remarks}

\title{Should Tables Be Sorted? Revisited with a Large Language Model}
\author{Songhua He\\
    {\normalsize Rutgers University}\\
    {\small\texttt{songhuahe.cs@gmail.com}}}
\date{September 2026}

\begin{document}
\maketitle

\begin{abstract}
We revisit the implicit membership problem in Yao's full-table model
\cite{yao1981should} and obtain, to our knowledge, the first quantitative
improvements to his 45-year-old Ramsey bounds, most notably reducing the
two-probe bound from tower-type to polynomial.  In this model, a set
$S\subseteq\{1,\ldots,m\}$ of size $n$ is stored as a permutation of its
elements in an $n$-cell table, and a query algorithm probes cells to decide
whether $x\in S$.  Let $G_q(n)$ be the largest universe size $m$ for which a
$q$-probe membership scheme works for every size-$n$ subset of
$\{1,\dots,m\}$.  Yao proved that, for every fixed $n$, binary search for $x$
in a sorted table is optimal once $m$ is sufficiently large, but the Ramsey
argument gives an enormous tower-type upper bound on $G_q(n)$.  Yao explicitly
asked for the behavior of $G_q(n)$.  He determined the one-probe case exactly,
proving $G_1(n)=2n-2$ for $n>2$, but the behavior for $q\ge2$ remained wide
open.  On the construction side, Fiat and Naor \cite{fiat1993implicit} showed
that, for every large enough constant $q$, membership can be answered with $q$
probes for universe size up to $\exp(n^c)$ for some constant $c>0$.  For the
first adaptive case, $q=2$, we prove
$$
        G_2(n)=O(n^2(\log n)^2).
$$
For every fixed integer $q\ge3$, we show that $G_q(n)$ is at most a tower of
height $q-1$ with top $n^{1+o(1)}$; in particular,
$G_3(n)\le\exp(n^{1+o(1)})$.  The two-probe proof avoids Ramsey theory
altogether; for larger fixed $q$, we use Ramsey theory only to make the first
$q-1$ probes follow a fixed pattern, and then handle the last probe by the same
non-Ramsey argument.

Somewhat surprisingly, for each fixed $q$, we also show that implicit
membership is as hard as implicit search up to a polynomial loss in universe
size.  Implicit search is the harder variant that must reject if $x\notin S$
and otherwise output the cell containing $x$ \cite{fiat1993implicit}.  Let
$H_q(n)$ denote the analogous search threshold.  Trivially,
$H_q(n)\le G_q(n)$; conversely, for every $q,n$ we prove
$$
        G_q(n)\le n^q\bigl(H_q(n)+1\bigr)^{q+1}.
$$
Thus, for every fixed $q$, one threshold is at most $\exp(n^{O(1)})$ if and
only if the other is.  The proofs were first generated by ChatGPT 5.5 Pro
without mathematical hints; the membership-search equivalence emerged while
pursuing an improved four-probe bound.  The authors have validated and edited
the proofs and assume responsibility for all content.

\end{abstract}

\newpage
\tableofcontents
\newpage

\section{Introduction}

Dictionary lookup is one of the most fundamental tasks in data structures.
Standard dictionary models may use empty or additional cells, and cell
contents may encode arbitrary auxiliary information such as hash-function
descriptions.  Yao's full-table model isolates the opposite extreme
\cite{yao1981should}.  The data structure has exactly $n$
cells, each containing one distinct key from a set
$S\subseteq\{1,\ldots,m\}$ of size $n$.  Thus the storage scheme can choose
only the order of the keys in the table.  A query algorithm, knowing the
scheme, receives $x\in\{1,\ldots,m\}$ and probes cells adaptively until it
decides whether $x\in S$.

The question is not how many bits are needed to store a dictionary, but how
many probes are forced when the representation is exactly an ordering of the
stored keys.  Storing the keys in sorted order supports membership queries in
at most $\lceil\log_2(n+1)\rceil$ probes by binary search.  Yao proved that for
every fixed $n$, sorted order is optimal once the universe size $m$ is
sufficiently large.  Yao's proof was one of the earliest striking uses of
Ramsey theory to prove an impossibility result in computer science.
Quantitatively, however, the resulting bound is enormous.  Writing $G_q(n)$
for the largest universe size for which $q$ probes suffice, his proof gives the
upper bound\footnote{Throughout this work, upper bounds and lower bounds refer to
bounds on the threshold function $G_q(n)$.  Thus upper bounds are impossibility
results for $q$-probe schemes, whereas lower bounds are constructions.}
$$
        G_q(n)\le R_{n!}^{(n)}(2n-1)-1,
        \qquad q<\lceil\log_2(n+1)\rceil.
$$
Here $R_c^{(r)}(k)$ is the usual $c$-color Ramsey number for colorings of
$r$-subsets; the displayed bound has tower height $n$.
Motivated by the size of this bound, Yao explicitly asked to determine, for
given $n$ and $q$, the largest universe size for which $q$ probes suffice
\cite{yao1981should}.

He determined the one-probe threshold exactly: $G_1(n)=2n-2$ for $n>2$.
The natural next case is $q=2$.  It is the first setting in which adaptivity
can matter.  This leaves the first adaptive case:\footnote{The REGS program at the
University of Illinois explicitly singled out the two-probe case on a 2010
open-problem page and noted that even a quadratic-size construction was not
known; see \url{https://dwest.web.illinois.edu/regs/yao.html}.}
\begin{quote}
    Can two adaptive probes exploit the permutation of the keys to handle a
    universe much larger than linear, or does the full-table restriction
    keep two probes fundamentally weak?
\end{quote}
We prove the near-quadratic bound $G_2(n)=O(n^2(\log n)^2)$.  More generally,
for every fixed integer $q\ge3$, we reduce Yao's tower height from $n$ to
$q-1$.
In addition, we give a coarse equivalence between membership and the
implicit search problem studied by Fiat and Naor \cite{fiat1993implicit},
where a query must
reject if $x\notin S$ and, if $x\in S$, must output the address of the cell
containing $x$.
Writing $H_q(n)$ for the analogous search threshold, search is harder than
membership, so $H_q(n)\le G_q(n)$.  Our converse reduction incurs only a
polynomial loss in the universe size.  Consequently, for every fixed integer
$q$, $G_q(n)\le\exp(n^{O(1)})$ if and only if
$H_q(n)\le\exp(n^{O(1)})$.

\Needspace{10\baselineskip}
\subsection{Results}

Recall that $G_q(n)$ denotes the largest universe size on which a $q$-probe
implicit membership scheme works for all $n$-subsets in Yao's full-table
model.

\begin{theorem}[Two-probe upper bound]\label{thm:intro-g2}
$$
        G_2(n)=O(n^2(\log n)^2).
$$
\end{theorem}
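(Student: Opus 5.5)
We fix a two-probe scheme for universe $[m]$ and show that $m\le Cn^2(\log n)^2$ by a counting argument that avoids Ramsey theory.

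\emph{Normalization.} A two-probe algorithm is described by a first-probe map $f:[m]\to[n]$, giving the cell probed first on query $x$. For each $x$ and each value $y$ read in cell $f(x)$ it also specifies either an answer or a second cell $g(x,y)$. Write $A_i=f^{-1}(i)$ for the class of queries that first probe cell $i$. Some class has size $|A_i|\ge m/n$.

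\emph{Restriction to a subuniverse.} Fix a cell $i$ whose class $A=A_i$ is large, say $|A|\ge m/n\approx Cn(\log n)^2$. We look only at sets $S$ and storage permutations for which cell $i$ holds some fixed key $y_0\notin A$. Then every query in $A$ reads the same value $y_0$ in cell $i$, so on $A$ the scheme becomes non-adaptive. Each $x\in A$ then probes exactly one further cell $h(x)=g(x,y_0)$ and must decide membership from that single value.

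This is essentially the one-probe situation on $n-1$ remaining cells, except that $y_0$ is forced into cell $i$. The storage scheme may also place $y_0$ elsewhere, so we cannot insist on it being in cell $i$. The fix is to average over the choice of $y_0$: we need some key whose placement in cell $i$ is unavoidable for many sets, and a pigeonhole over sets will supply one. Concretely, consider $n$-sets of the form $S=\{y\}\cup T$ with $T\subseteq A$ of size $n-1$. The key occupying cell $i$ is either $y$ or an element of $A$. Since each query in $A$ already reads cell $i$, that cell is useful as a direct check for the at most $n$ queries whose answer it fixes.

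\emph{Applying Yao's one-probe argument, quantitatively.} The one-probe bound $G_1(n)=2n-2$ comes from a simple adversary. For each cell $j$, the queries $x$ with $h(x)=j$ must have a decision rule based on the value read there, and two queries sharing cell $j$ force restrictions on which keys may sit in cell $j$. I would run a weighted version of this argument. Let $B_j=h^{-1}(j)\cap A$. For a random $(n-1)$-subset $T\subseteq A$, the storage must put some key of $T$ into each cell. When $x\in T$ and $h(x)=j$, cell $j$ must certify $x$. The key it holds must therefore be a ``witness'' for $x$, meaning it is accepted for $x$ and rejected for every $x'\in B_j\setminus T$.

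Now do the counting. Each value $z$ in cell $j$ is a witness for some subset $W_{j,z}\subseteq B_j$. For a random $T$, all of $T\cap B_j$ must lie inside $W_{j,z}$, where $z$ is the key in cell $j$, and $W_{j,z}$ must avoid $B_j\setminus T$. Hence $T\cap B_j$ equals $W_{j,z}\cap B_j$ exactly, up to the element $z$ itself. This forces $|T\cap B_j|\le 1+\text{(small)}$ with the structure of a one-probe scheme, which gives about $2$ elements of $T$ per cell. A typical random $T$ of size $n-1$ from $A$ has $|T\cap B_j|$ concentrated around $(n-1)|B_j|/|A|$. A Chernoff bound combined with the requirement that at most $O(1)$ elements fall into each $B_j$, simultaneously for all $j$, fails with high probability once $|A|\gg n\log n$. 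Some $B_j$ then has size $\gtrsim |A|/n$ and receives more than $O(1)$ elements of $T$. The second $\log n$ factor pays for the union bound over the choice of $y_0$ and of the cell $i$.

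\emph{Main obstacle.} The delicate step is the first reduction: justifying that the value in cell $i$ can be treated as fixed. The storage scheme adaptively chooses which key sits in cell $i$, and that key can encode information about $T$, up to $\log n$ bits through its rank. I would handle this by restricting to sets $S=T\cup Y$ with $Y$ a set of $\Theta(\log n)$ fixed outside keys. Pigeonhole over the $n$ possible cell-$i$ contents then leaves a $1/n$ fraction of sets with the same content. Turning this into the random-$T$ concentration argument is exactly where the $(\log n)^2$ loss should appear, so I expect the main work to lie in making that density-increment precise.
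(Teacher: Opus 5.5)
\textbf{Gap 1: the counting step is false as stated.} Your skeleton matches the paper's: pigeonhole a common first cell $i$, condition on the key read there, split the remaining queries by their second cell, and observe that the key $z$ in cell $j$ determines $T\cap B_j$. The trouble is the step meant to produce the contradiction.

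The trace constraint says only that $T\cap B_j$ is one of at most $|A|$ sets $W_{j,z}$. It says nothing about their sizes, so nothing forces $|T\cap B_j|\le 1+O(1)$, and your Chernoff/max-load argument has no valid constraint to contradict. Moreover, on a part of size about $|A|/n$, a random $(n-1)$-subset of $A$ has load with mean about $1$ whatever $|A|$ is. So load concentration cannot be where the threshold on $|A|$ enters.

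What is true is a weighted trace count, which is the heart of the paper's proof:
\begin{itemize}
\item Sample $B$ by including each element independently with probability $p\approx n/|A|$.
\item Refine the parts into at least $n/2$ pieces $H$ of size between $\ell/2$ and $\ell$, where $\ell=|A|/n$. This is needed for independence, since the $B_j$ may have arbitrary sizes.
\item Split the allowed traces on each $H$ by size. The at most $|A|$ traces of size $>s$ each have probability at most $p^{s+1}$, so together they carry mass about $n/\ell^{s}$. Traces of size $\le s$ carry at most $1-\Pr[|B\cap H|=s+1]$.
\item Choose $s\approx\log n/\log\log n$ so that $\Pr[|B\cap H|=s+1]\ge 100\log n/n$, and demand $n/\ell^{s}=o(1/n)$. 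This is exactly where $\ell\approx(\log n)^2$, i.e.\ $|A|\approx n(\log n)^2$, comes from.
\item Independence over pieces gives $\Pr[B\in\mathcal{F}]\le n^{-40}$, and conditioning on $|B|=n-1$ gives density at most $n^{-3}$.
\end{itemize}
So the $(\log n)^2$ is not a union bound over $y_0$ and $i$. The cell $i$ is fixed once by pigeonhole, which costs only the factor $n$ in $m\le n|A|$.

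\textbf{Gap 2: the first-cell reduction is left unresolved.} You flag fixing the content of cell $i$ as the main obstacle, and your proposed fix with $\Theta(\log n)$ outside keys is neither carried out nor needed. Instead, stay inside $A$. For each $y\in A$, let $\mathcal{F}_y$ be the family of $(n-1)$-sets $R\subseteq A\setminus\{y\}$ such that the table of $R\cup\{y\}$ holds $y$ in cell $i$.

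Every $n$-subset of $A$ arises from exactly one pair $(y,R)$, so $\sum_{y\in A}|\mathcal{F}_y|=\binom{|A|}{n}=\frac{|A|}{n}\binom{|A|-1}{n-1}$. Within a fixed fiber, the queries of $A\setminus\{y\}$ are non-adaptive, which gives at most $|A|$ traces on each second-probe class.

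The fibers have average density exactly $1/n$ in $\binom{A\setminus\{y\}}{n-1}$. Hence the counting lemma must push every fiber strictly below density $1/n$; a ``fails with high probability'' statement is too weak. With these two ingredients supplied, your argument becomes the paper's proof.
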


The proof is entirely non-Ramsey.  It replaces Yao's tower-type upper bound for
the two-probe case by a near-quadratic one.

The next theorem improves Yao's bound for every fixed integer $q\ge3$.  Here
$\Tow_t(x)$ denotes a tower of $2$'s of height $t$ with top exponent $x$, so
$\Tow_1(x)=x$ and $\Tow_{t+1}(x)=2^{\Tow_t(x)}$.

\begin{theorem}[Fixed-$q$ upper bound]\label{thm:intro-fixedq}
For every fixed integer $q\ge3$,
$$
        G_q(n)\le \Tow_{q-1}(n^{1+o(1)}).
$$
\end{theorem}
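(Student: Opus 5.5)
The proof has two parts: a Ramsey reduction that fixes the pattern of the first $q-1$ probes on a structured subuniverse, and a non-Ramsey last-probe argument, the same as in the proof of \Cref{thm:intro-g2}, which shows that one probe cannot finish the job once that subuniverse is large. Suppose a $q$-probe scheme works on $\{1,\dots,m\}$. Choose a target size $k=n^{1+o(1)}$. From $m\ge\Tow_{q-1}(k)$ we want to extract a subuniverse $U$ of size $k$, and a restricted family of $n$-subsets of $U$, on which the first $q-1$ probes follow a fixed pattern. On this family the table arrangement should be "order-canonical", with the element of each rank stored in a fixed cell. The query tree's first $q-1$ probe addresses and branch outcomes should then depend only on the relative order of $x$ and the probed keys. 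Once this holds, the remaining task is exactly a one-adaptive-step problem on $U$.

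\textbf{Ramsey step.} I would not colour all $n$-subsets, since that is Yao's tower of height $n$. Instead I would colour only bounded-size configurations, of the form (probed keys, query $x$). Each of the first $q-1$ probes reveals one key, so along a path only $q-1$ keys plus $x$ matter. The plan is to colour $(q-1+1)$-tuples, or better to iterate one probe level at a time. At level $j$ we colour $j$-subsets, where the colour is the cell-address/branch behaviour, which has at most $n^{O(1)}$ values. The Erd\H{o}s--Rado stepping-up bound $R^{(r)}_c(k)\le\Tow_r(c^{O(k)})$ with $c$ polynomial in $n$ would give height about $q$ with top $k\log n$. I need to shave one level, to get $\Tow_{q-1}$. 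For that I would use the fact that the first probe is nonadaptive (its address depends only on $x$), so level one needs only pigeonhole/Erd\H{o}s--Szekeres on singletons and costs no exponential. Canonicalizing the arrangement itself is done via the freedom to choose which $n$-subsets we test: we use sets consisting of a fixed skeleton plus varying elements, so that only the probed positions' contents matter.

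\textbf{Last-probe step.} After canonicalization, for each query $x\in U$ and each history the final probe is a single cell $c(x,\text{history})$. The answer must be correct for all $n$-subsets in the family. Here I would reuse the counting/adversary argument behind \Cref{thm:intro-g2}. In that argument one probe per branch can certify membership of at most $O(n\log n)$ or so distinct query values per cell position. The reason is that an adversary can swap keys into that cell, and the $n$ cells can then "cover" only $n^{1+o(1)}$ universe elements before two queries collide with one placement fooling the algorithm. This would yield $k\le n^{1+o(1)}$, hence $m\le\Tow_{q-1}(n^{1+o(1)})$.

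\textbf{Main obstacle.} I expect the hardest part to be making the Ramsey colouring compatible with the adaptive last probe. The canonical pattern must be preserved for all the sets the adversary in the last step needs, while the colouring uses only $(q-2)$-uniform hypergraphs with $n^{O(1)}$ colours, so that the tower height is $q-1$ rather than $n$. I would first try to encode the adversary's required sets as "skeleton plus two free elements" and verify that the colouring need only control tuples of size at most $q-2$ beyond the skeleton.
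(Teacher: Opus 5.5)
Your high-level split matches the paper: pigeonhole the first probe, use Ramsey for the middle probes, and finish the last probe by non-Ramsey counting. But there is a genuine gap, in both the mechanism and the bookkeeping.

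\textbf{The mechanism.} Your plan to make the arrangement ``order-canonical'' and then test only ``skeleton plus two free elements'' sets cannot work.

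A colouring of bounded-size (query, reply) configurations says nothing about where the other $n-O(1)$ keys sit. Canonicalizing the whole arrangement is exactly what forces Yao to colour $n$-sets.

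Worse, such a family carries too little information to yield any contradiction. Let $K$ be a fixed skeleton and store the sets $K\cup\{a,b\}$ in sorted order. A query $x\notin K$ can then only be in cell $|\{z\in K:z<x\}|+1$ or the next cell. So two probes decide membership on the whole family, with all the canonical structure you ask for.

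The paper never canonicalizes anything. The colour of a $(q-1)$-set records only where probes $2,\dots,q-1$ go. Homogeneity merely confines these probes, until $x$ is read, to a small cell set $J$. The argument then works with \emph{all} of $\binom{U}{n}$:
\begin{itemize}
\item It conditions on the signature $\sigma=T_S|_J$. Each query's last cell $f_\sigma(x)$ is then fixed, and the content of that one cell determines the trace $R\cap H_j(\sigma)$.
\item It bounds each residual family $\mathcal{F}_\sigma$ by a trace-count lemma.
\item It sums over signatures: $\binom{m'}{n}\le(m')_r\max_\sigma|\mathcal{F}_\sigma|$, which gives the contradiction.
\end{itemize}
Your last-step heuristic (each cell ``certifies'' about $n\log n$ query values) is not this argument, and it never pays for the number of signatures.

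\textbf{The bookkeeping.} The level count and the top of the tower also do not close.

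The address of probe $q-1$ depends on $x$ and $q-2$ replies, so the colouring must be $(q-1)$-uniform. A $(q-2)$-uniform colouring cannot control that address, and nonadaptivity of the first probe does not lower the uniformity.

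Height $q-1$ then comes directly from $R^{(r)}_\lambda(k)\le\Tow_r(O(k\lambda\log\lambda))$; your estimate $\Tow_r(c^{O(k)})$ wastes a level. But with $\lambda=n^{O(1)}$ colours (exact cell addresses), this bound gives top $n^{O(1)}$. So even after the first issue is repaired, you would get only $\Tow_{q-1}(n^{O(1)})$.

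To reach $n^{1+o(1)}$, the paper records only the \emph{block} of each address, using $n^{o(1)}$ blocks so that $\lambda=n^{o(1)}$. This has a cost:
\begin{itemize}
\item $J$ then has up to about $n/(200\rho\log n)$ cells, and summing over signatures costs a factor $(n)_r\le n^r$.
\item The two-probe density bound $n^{-3}$ cannot absorb this factor.
\item So the paper proves a stronger trace lemma, with density $\exp(-n/\rho)$ where $\rho=\exp(c\sqrt{\log n\,\log\log n})$.
\item It then chooses the block size so that $r\ln n\le n/(100\rho)$.
\end{itemize}
Neither the coarsening nor this balance appears in your proposal.
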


Thus the tower height drops from $n$ in Yao's original Ramsey argument to
$q-1$.  In particular,
$$
        G_3(n)\le \exp(n^{1+o(1)}).
$$

\begin{table}[!htbp]
\centering
\small
\setlength{\tabcolsep}{1.5pt}
\begin{tabular}{@{}>{\raggedright\arraybackslash}m{0.18\linewidth}
                    >{\raggedright\arraybackslash}m{0.34\linewidth}
                    >{\raggedright\arraybackslash}m{0.26\linewidth}
                    >{\raggedright\arraybackslash}m{0.20\linewidth}@{}}
\toprule
Probe budget &
Known lower bound &
Previous upper bound &
Upper bound from this work \\
\midrule
$q=1$, $n>2$ &
$G_1(n)\ge2n-2$ \cite{yao1981should} &
$G_1(n)\le2n-2$ \cite{yao1981should} &
-- \\
\midrule
$q=2$ &
$G_2(n)\ge3n-4$ \cite{jobson2011avoiding} &
$\Tow_n(n^{O(n)})$ \cite{yao1981should} &
$O(n^2(\log n)^2)$ \\
\midrule
fixed $q\ge3$ &
$G_q(n)\ge3n-4$ \cite{jobson2011avoiding} &
$\Tow_n(n^{O(n)})$ \cite{yao1981should} &
$\Tow_{q-1}(n^{1+o(1)})$ \\
\midrule
sufficiently large fixed $q$ &
\begin{tabular}{@{}l@{}}
$G_q(n)\ge n^{c_q}$ constructive\\
$G_q(n)\ge \exp(n^{c_q'})$ nonconstructive
\end{tabular}
\cite{fiat1993implicit} &
$\Tow_n(n^{O(n)})$ \cite{yao1981should} &
$\Tow_{q-1}(n^{1+o(1)})$ \\
\midrule
$q\ge\lceil\log_2(n+1)\rceil$ &
$G_q(n)=\infty$ &
-- &
-- \\
\bottomrule
\end{tabular}
\caption{A comparison of known bounds on $G_q(n)$.  Lower bounds are
constructions in Yao's full-table model; upper bounds are impossibility results
for universes above the displayed scale.  In the fourth row, $c_q,c_q'>0$
depend only on $q$.  The last row marks the point at which sorted order gives
binary search and the threshold becomes infinite.}
\label{tab:gq-bounds}
\end{table}
\FloatBarrier

Our last result is a structural connection between implicit membership and its
search variant.  Recall that $H_q(n)$ is the corresponding universe threshold
for implicit search.  Since search is harder than membership,
$H_q(n)\le G_q(n)$.  The next theorem gives a converse comparison with only a
polynomial loss for each fixed integer $q$.

\begin{theorem}[Extracting search from membership]
\label{thm:intro-membership-search}
For every $q,n$,
$$
        G_q(n)\le n^q\bigl(H_q(n)+1\bigr)^{q+1}.
$$
\end{theorem}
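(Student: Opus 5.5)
The plan is to argue by contraposition. Suppose a $q$-probe membership scheme works on $\{1,\dots,m\}$ with $m> n^q(H_q(n)+1)^{q+1}$. I would extract a sub-universe $U$ with $|U|=H_q(n)+1$ on which the same storage rule, restricted to $n$-subsets of $U$, supports $q$-probe search. That contradicts the definition of $H_q(n)$.

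The search algorithm on $U$ is the membership algorithm itself, with one change. When it accepts, it outputs the probed cell in which it read $x$. This is well defined as soon as every accepting computation for a query $x\in U$ on a table storing some $S\subseteq U$ actually reads the key $x$. So the whole proof reduces to choosing $U$ so that no element of $U$ is ever \emph{certified without being read}.

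To formalise this, fix a query $x$. An accepting root-to-leaf path of the decision tree for $x$ is a sequence of (cell, key) pairs $(c_1,y_1),\dots,(c_t,y_t)$ with $t\le q$. Call it a \emph{blind certificate} for $x$ if $x\notin\{y_1,\dots,y_t\}$. Record each blind certificate as a hyperedge $\{x,y_1,\dots,y_t\}$ of size at most $q+1$, labelled by the cell pattern $(c_1,\dots,c_t)\in[n]^{t}$. If $U$ contains no hyperedge that is realised by some table storing some $S\subseteq U$, then accepted queries always read $x$ and the search scheme works. Rejections are unchanged, since membership is correct on $U$.

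So I need a large set $U$ that is independent with respect to the realised hyperedges. I would build it by a greedy or random-deletion argument on a $(q+1)$-uniform hypergraph on $[m]$. The cell patterns contribute at most $n^q$ labels, which should account for the factor $n^q$. The exponent $q+1$ should come from the standard bound: a $(q+1)$-uniform hypergraph whose edges are sparse enough relative to the number of vertices has an independent set of size roughly $m^{1/(q+1)}$, after dividing out the labels.

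The main obstacle is bounding the number of realised blind certificates. A naive count gives up to $m^q$ leaves per query, which is far too many. The fix I would try first uses correctness of membership. Suppose $\{x,y_1,\dots,y_t\}$ with pattern $(c_i)$ is a blind certificate for $x$. Then no table storing a set that avoids $x$ may place $y_i$ in $c_i$ for all $i$, since otherwise the scheme would wrongly accept $x$. Hence, for each pattern, the placements $(y_1,\dots,y_t)$ that act as blind certificates for $x$ are forbidden configurations for every set not containing $x$.

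This should yield an Erdős--Szekeres-style ordering argument. Process the universe in increasing order, and at each of the $q$ levels (one per probe) keep a monotone subsequence of length about $m^{1/(q+1)}$ per cell. Doing so removes all blind certificates and loses only a factor $n$ per probe level, giving the target bound $|U|\ge (m/n^q)^{1/(q+1)}>H_q(n)+1$.

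If the ordering argument fails, the fallback is to iterate over probe depth. First restrict to elements whose first probe cell and first-read behaviour agree, costing a factor $n$ and one root. Then recurse on the remaining $q-1$ probes, checking at the end that a blind acceptance would contradict correctness on a set obtained by swapping $x$ for an unused element of the same class.
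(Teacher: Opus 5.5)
Your reduction matches the paper's: restrict to a subuniverse of size $H_q(n)+1$ on which every accepting run reads $x$, then output the cell where $x$ was seen. The gap is in the step that is supposed to produce that subuniverse. You correctly note that the hypergraph of all blind certificates $\{x,y_1,\dots,y_t\}$ has about $m\cdot n^qm^q$ edges. The first-moment bound for a random $\ell$-set is then only $n^q\ell^{q+1}$, which is not below $1$. Neither of your proposed fixes closes this. The Erd\H{o}s--Szekeres step has no mechanism: the numerical order on $[m]$ has nothing to do with the storage scheme. Moreover, ``a monotone subsequence per cell'' is not defined, since the cell holding a key depends on the whole stored set. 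The fallback's swap argument fails for the same reason. The table of $(S\setminus\{x\})\cup\{x'\}$ may be unrelated to $T_S$, and correctness only constrains tables that realize the same transcript.

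The missing idea is to apply your own correctness observation to all queries of a transcript at once. Pad to exactly $q$ probes and group blind certificates by the full transcript $\tau=((\alpha_1,\beta_1),\dots,(\alpha_q,\beta_q))$. Every table realizing $\tau$ must contain every query $x$ for which $\tau$ is blind-accepting. So it suffices to fix one representative $b(\tau)$ and require only that $U$ not contain the core $\{\beta_1,\dots,\beta_q,b(\tau)\}$. Then no $S\subseteq U$ realizes $\tau$ at all, even if $U$ contains other blindly accepted queries together with all the $\beta_i$. Demanding that $U$ avoid every certificate hyperedge individually is exactly what costs you the extra factor $m$. With representatives there are at most $n^qm^q$ cores, each of size $q+1$. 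The expected number of cores inside a uniformly random $\ell$-subset, $\ell=H_q(n)+1$, is at most $n^qm^q(\ell/m)^{q+1}=n^q\ell^{q+1}/m<1$. A plain first-moment argument then finishes the proof, with no ordering or recursion over probe depth needed.
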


Together with $H_q(n)\le G_q(n)$, this gives a coarse equivalence: for every
fixed integer $q$, $G_q(n)\le\exp(n^{O(1)})$ if and only if
$H_q(n)\le\exp(n^{O(1)})$.

Fiat and Naor explicitly identified the gap between exponential constructions
and constant-height-tower impossibility bounds for implicit search as an open
problem \cite{fiat1993implicit}.  Since $H_q(n)\le G_q(n)$,
\Cref{thm:intro-g2,thm:intro-fixedq} improve their $q$-probe search upper bound
from tower height
$q+2\lceil\log_2 q\rceil$ to a polynomial for $q=2$ and to tower height $q-1$
for every fixed $q\ge3$.

\subsection{Related work}

\paragraph{Previous work on implicit membership.}
Yao introduced the full-table membership problem and proved the exact one-probe
threshold $G_1(n)=2n-2$ for $n>2$ \cite{yao1981should}.  For larger constant
probe bounds, his general Ramsey argument gives a finite upper bound of tower
height $n$.  On the lower-bound side, Jobson gives
an adaptive two-probe construction with universe size $3n-4$
\cite{jobson2011avoiding}.  Howard studied the nonadaptive two-probe
restriction, in which the two probes are simultaneous: both cell addresses are
fixed by the query before either cell value is seen
\cite{howard2014determining}.  In that model
he constructed schemes for $m\le \frac52 n-3$ when $n\ge4$, and proved that no
scheme exists for $n\ge30$ once
$m\ge 6\binom{n}{2}\log n$; the constant $6$ can be replaced by
$4+\varepsilon$ for all sufficiently large $n$.  Thus Howard obtained a
polynomial impossibility bound in the nonadaptive two-probe model, whereas no
polynomial upper bound was previously known for adaptive two-probe membership.

\paragraph{Previous work on implicit search.}
The study of implicit search was initiated by Fiat and Naor
\cite{fiat1993implicit}.  For every sufficiently large fixed integer $q$, they
give a constant $c_q>0$ such that
$H_q(n)\ge\exp(n^{c_q})$.  They also
introduced a combinatorial structure called a \emph{rainbow}, which coarsely
characterizes implicit search.  In short, a rainbow is an $n$-coloring of all
sequences of some fixed constant length from the universe $\{1,\dots,m\}$ such
that, for every $n$-subset $S$ of the universe, all $n$ colors occur among the
sequences whose entries lie in $S$.  They show that, when $m\le 2^n$,
constant-probe implicit search schemes exist if and only if rainbows exist.
On the impossibility side, they
show that a $q$-probe search scheme yields a rainbow on sequences of length
$q+2\lceil\log_2 q\rceil$.  The corresponding Ramsey bound
gives $H_q(n)\le
\Tow_{q+2\lceil\log_2 q\rceil}(n^{O_q(1)})$.  They also point out that
disperser constructions \cite{sipser1988expanders,cohen1989dispersers} imply
rainbow constructions.

Our \Cref{thm:intro-membership-search} adds the missing membership-to-search
direction by extracting search schemes from membership schemes, up to
polynomial loss.  See the illustration below.

\begin{figure}[!htbp]
\centering
\vspace{0.35\baselineskip}
\begin{tikzpicture}[
    obj/.style={
        align=center,
        inner sep=0pt,
        text width=2.45cm
    },
    smallobj/.style={
        align=center,
        inner sep=0pt,
        text width=1.55cm
    },
    arr/.style={
        -{Stealth[length=3.2mm,width=2.4mm]},
        line width=1.1pt,
        shorten <=2pt,
        shorten >=2pt
    },
    lab/.style={font=\scriptsize, align=center, inner sep=0.5pt},
    >=Stealth
]
\node[obj] (membership) at (0,0)
    {$O(1)$-probe\\implicit membership};
\node[obj] (search) at (5.0,0)
    {$O(1)$-probe\\implicit search};
\node[smallobj] (rainbow) at (9.0,0)
    {rainbows};

\draw[arr] ([yshift=-8pt]membership.east) --
    node[lab, below=5pt] {\textbf{this work}\\fixed polynomial loss}
    ([yshift=-8pt]search.west);
\draw[arr] ([yshift=8pt]search.west) --
    node[lab, above=4pt] {trivial} ([yshift=8pt]membership.east);

\draw[arr] ([yshift=-8pt]search.east) --
    node[lab, below=5pt] {\cite{fiat1993implicit}} ([yshift=-8pt]rainbow.west);
\draw[arr] ([yshift=8pt]rainbow.west) --
    node[lab, above=4pt] {\cite{fiat1993implicit}} ([yshift=8pt]search.east);
\end{tikzpicture}
\caption{Relationships among implicit membership, implicit search, and
rainbows.  Arrows denote construction or existence implications.}
\label{fig:search-membership-rainbow}
\end{figure}

\paragraph{Other related dictionary problems.}
Fich and Miltersen studied the same static membership problem on a unit-cost
RAM and proved time lower bounds matching binary search on a sorted table
\cite{fich1995tables}.  Their model allows arbitrary integer encodings but
charges RAM computation, whereas Yao's full-table model gives computation for
free but restricts storage to a permutation of the stored keys.

A large body of dictionary and membership work allows the data structure to
store an arbitrary encoding of the set
\cite{fredman1984storing,buhrman2000bitvectors,pagh2001low,
raman2007succinct,patrascu2008succincter,garg2015set,garg2017set,
larsen2024optimalnonadaptive,hu2025optimalstatic}.  Yao's full-table model
removes this freedom: the table has exactly $n$ cells, each containing one
distinct stored key, so the only auxiliary information is the order of those
keys.  Thus our bounds do not lower-bound ordinary dictionaries; they measure
the cost of forbidding all encoded memory beyond the permutation itself.

Finally, there are broad literatures on implicit data structures,
and on implicit graph representations.  In implicit data structures,
there are related works on dynamic dictionaries that support insertions and
deletions
\cite{munro1980implicit,munro1986implicit,borodin1988tradeoff,
franceschini2006implicit,brodal2012cache}.  The goal there is to support
efficient updates and queries, whereas the implicit membership and search
problems studied in this work are static.  Implicit graph representations ask a
different question: how to label vertices of a graph so that adjacency can be
determined from the labels alone
\cite{kannan1992implicit,spinrad2003efficient,hatami2022implicit}.  Graph
representations are local decoding from chosen labels, whereas implicit
membership has no chosen labels, only the permutation of the stored keys.

\subsection{Proof ideas}

The common principle behind our improvements is to exploit the limited
information exposed by a constant-probe query.  Yao instead colors each
$n$-element subset by its entire table order and then applies Ramsey's theorem
to this coloring
\cite{yao1981should}.  On a large homogeneous set, every stored set has the
same relative rank order, reducing the problem to search in a fixed
permutation of sorted order.  This is conceptually clean, but it is too global
for a constant-probe algorithm: the color records the entire table permutation
although each query reads only a few cells.  The resulting Ramsey bound
therefore has tower height $n$, even when the query algorithm is allowed only
two probes.

\paragraph{Two probes.}
For adaptive two-probe schemes, the best previous upper bound was still the one
given by Yao's Ramsey argument.
The one-probe case was settled exactly by Yao, but his proof is a special
case analysis and gives little guidance once the second probe may depend on the
first value read.  On the other hand, small constructions show that two probes
can already beat one probe: Jobson shows that
$G_2(n)\ge 3n-4$ \cite{jobson2011avoiding}.  Thus the first real question is
whether adaptivity can make two probes powerful on superpolynomial universes.

Our answer is no.  The proof begins with a transcript-rectangle observation:
if two query-input pairs $(x,S)$ and $(x',S')$ have the same two probed cells
and the same two values read, then the crossed pairs $(x,S')$ and $(x',S)$ have
the same transcript as well.  Crossing the stored sets therefore preserves the
transcript for each fixed query.  The algorithm gives the same answer to $x$
on $S$ and $S'$, so correctness gives $x\in S$ if and only if $x\in S'$;
similarly, $x'\in S$ if and only if $x'\in S'$.

We next use the first probe to create a residual problem.  Since the first
address depends only on the query, there is a large set $Q$ of queries that
first probe a common cell.  Restrict to stored sets inside $Q$.  For each possible
key $y\in Q$ read from that first cell, let $\mcalf_y$ be the family of
residual sets $R$ of size $n-1$ for which the table of $R\cup\{y\}$ places
$y$ in the common first cell.  Every stored $n$-set inside $Q$ has a unique key
$y$ in that cell, and hence a unique representation $R\cup\{y\}$ with
$R\in\mcalf_y$.

Now fix $y$.  After the first probe has returned $y$, the second address on a
query $x\in Q\setminus\{y\}$ is already determined, so the queries split
into at most $n$ parts according to the second cell they inspect.  The key
observation is that, on each part $P_j$ consisting of the keys that inspect
cell $j$, the number of different intersections $R\cap P_j$ is at most $|Q|$.
Indeed, if $R,R'\in\mcalf_y$ and the tables for $R\cup\{y\}$ and
$R'\cup\{y\}$ have the same content in cell $j$, then for every $x\in P_j$ the
query-input pairs $(x,R\cup\{y\})$ and $(x,R'\cup\{y\})$ have the same
transcript.  Correctness therefore forces $x\in R$ if and only if $x\in R'$,
so $R\cap P_j=R'\cap P_j$.

The final step is a counting lemma.  If $|Q|$ were a sufficiently large
multiple of $n(\log n)^2$, the intersection bound above would make every fiber
$\mcalf_y$ have density at most $n^{-3}$ inside
$\binom{Q\setminus\{y\}}{n-1}$.  Summing over $y$ would then account for at most
$n^{-2}\binom{|Q|}{n}$ pairs $(y,R)$, fewer than the $\binom{|Q|}{n}$ pairs
needed to represent every $n$-subset of $Q$ exactly once.  Thus
$|Q|=O(n(\log n)^2)$, and since $|Q|\ge m/n$, we obtain
$$
        G_2(n)=O(n^2(\log n)^2).
$$

\paragraph{A constant number of probes.}
\label{pg:fixed-q_overview}
For a fixed integer $q\ge3$, the benchmark is again the upper bound from Yao's Ramsey
argument, which gives a
tower-type upper bound of height $n$.  The difficulty is that the two-probe proof cannot simply be iterated.
After several adaptive probes, the next address may depend on the whole sequence
of keys already seen, so the clean residual partition used in the two-probe
proof need not persist.  Fiat and Naor's implicit search scheme constructions make this warning
concrete: carefully arranged permutations can simulate
surprisingly rich auxiliary information \cite{fiat1993implicit}.

A first refinement already improves Yao's argument substantially.  For each
$(q+1)$-subset, color it by the list of at most
$q(q+1)!$ cell addresses probed in all formal executions obtained by ordering
its elements as the query and the possible replies.  There are at most
$n^{q(q+1)!}$ colors.  If
$$
        m\ge R_{n^{q(q+1)!}}^{(q+1)}(2n),
$$
Ramsey theory gives a homogeneous set $U$ of size $2n$.  On $U$, one fixed list
of at most $q(q+1)!$ cells contains every cell that can be probed by a query
whose visible values stay inside $U$, until the query has either seen $x\in U$ or
exhausted its $q$ probes.  By the rectangular property, if two stored sets
$S,S'\subseteq U$ have the same contents in these cells, then every query
$x\in U$ sees the same transcript and receives the same answer for $S$ and
$S'$.  Correctness therefore forces $x\in S$ iff $x\in S'$ for every
$x\in U$, so $S=S'$.  Hence the $\binom{2n}{n}$ possible stored sets in $U$
must induce distinct contents on these at most $q(q+1)!$ cells, but there are
only $(2n)^{q(q+1)!}$ such contents.  This is impossible because
$\binom{2n}{n}=2^{\Theta(n)}\gg (2n)^{q(q+1)!}$.  Hence this crude
probe-pattern Ramsey argument already gives
$$
        G_q(n)<R_{n^{q(q+1)!}}^{(q+1)}(2n)\le \Tow_{q+1}(n^{O_q(1)}),
$$
a tower whose height is only $q+1$, rather than $n$.

Our proof gains two Ramsey levels over this refinement.  We first pigeonhole a
common first cell, so the first probe requires no Ramsey step.  We then apply
Ramsey theory only to the intermediate probes $2,\ldots,q-1$ and leave the last
probe to the non-Ramsey counting argument, removing another Ramsey level.  To
obtain the sharper top of the tower, we partition the table into coarse blocks
and record only which blocks are reached, rather than the exact cell indices.
Together, these refinements yield an upper bound of tower height $q-1$ with top
$n^{1+o(1)}$.

\paragraph{Extracting search from membership.}
Our next result is a somewhat surprising equivalence between implicit membership and implicit search.
Since search is harder than
membership, any upper bound on $G_q(n)$ gives the same upper bound for
$H_q(n)$, so our
membership upper bounds rule out search at the same probe budget.  The
converse is not obvious: a membership algorithm may
accept a present key without ever seeing it, while a search algorithm must find
the cell that contains the key.

The proof is a probabilistic argument.
One would like to choose a subuniverse of $[m]$ that avoids every bad pair $(x,\tau)$ consisting of
a query $x$ and its accepting transcript $\tau$ that does not read $x$.  A direct
union bound is too weak, however, because it pays an extra factor of $m$ for the
choice of the query.  Indeed, after padding the algorithm to exactly $q$
probes, there are about $m\cdot n^qm^q$ possible query-transcript pairs, and
for each pair the values read together with $x$ form a set of $q+1$ elements
that would have to lie in the subuniverse; a random subuniverse of size $\ell$
would therefore leave, in expectation, survivors on the order of
$$
        m\cdot n^qm^q\left(\frac{\ell}{m}\right)^{q+1}
        =
        n^q\ell^{q+1},
$$
which is not small enough.\footnote{A uniformly random $\ell$-subset of
$\{1,\ldots,m\}$ contains any fixed set of $q+1$ elements with probability
about $(\ell/m)^{q+1}$.}  The point is that these bad pairs have a rectangular
structure.  Group them by the cells probed and the values read.  If a stored
set realizes one such transcript, then correctness forces it to contain every
query $x$ that the scheme accepts blindly with this transcript; otherwise the
same transcript would give a false positive.  Thus it suffices to choose one
representative blindly accepted query for each transcript, and to avoid only
the small forced set formed by this representative together with the values
read.

There are not many such forced sets, since a $q$-probe computation reads only
$q$ addresses and $q$ values.  After taking representatives, the same union
bound becomes
$$
        n^qm^q\left(\frac{\ell}{m}\right)^{q+1}
        =
        \frac{n^q\ell^{q+1}}{m},
$$
which is smaller than $1$ once
$m\gg n^q\ell^{q+1}$.  Since the number of surviving forced sets is
integer-valued, some subuniverse of size $\ell=H_q(n)+1$ avoids all
representative forced sets.  Then no bad pair can occur inside the restricted
universe.  Hence, on every
accepting query $x$, the membership scheme must actually read $x$, and it can
be turned into a search scheme by returning the cell where $x$ was found.

\section{Preliminaries}

\paragraph{Yao's full-table model.}
The universe is $[m]=\{1,\ldots,m\}$.  A storage scheme maps every set
$S\in\binom{[m]}{n}$ to an ordered table
$$
        T_S=(T_S[1],\ldots,T_S[n])
$$
whose entries are exactly the elements of $S$, each appearing once.  A
deterministic adaptive query algorithm receives $x\in[m]$ and probes cells of
$T_S$ one at a time.  The first probe address is a function of $x$ alone; later
addresses may depend on $x$ and on the cell contents already seen.
The storage map and all computation performed by the query algorithm are
unrestricted; only cell probes are counted.

For the membership problem, the query algorithm outputs a bit and must accept
exactly when $x\in S$.  A storage scheme together with such a query algorithm
is an \emph{implicit membership scheme} in Yao's full-table model.

For the search problem, the query algorithm outputs either ``not present'' or a
cell address.  It must output ``not present'' when $x\notin S$, and when
$x\in S$ it must output an address $i\in[n]$ with $T_S[i]=x$.  A storage
scheme together with such a query algorithm is an \emph{implicit search
scheme}.  Notice that a search scheme can be made to read $x$ before returning
its address with at most one additional probe: before reporting an address, it
probes that cell if it has not already done so.  Correctness ensures that
whenever an address is reported, that cell contains $x$.

We freely convert an ``at most $q$ probes'' algorithm into an ``exactly $q$
probes'' algorithm by adding dummy probes.  We also assume that an algorithm
does not reprobe a cell during a single query, since repeated probes reveal no
new information; this convention only simplifies the counting of transcript
patterns.

\paragraph{Universe thresholds.}
Let $f(n,m)$ be the minimum worst-case number of adaptive probes needed to solve
implicit membership for all $n$-subsets of $[m]$ in this model.  Define
$$
        G_q(n)=\max\{m:f(n,m)\le q\}.
$$
Thus $G_q(n)$ is the largest universe size for which $q$ probes suffice.  Yao's
one-probe theorem gives $G_1(n)=2n-2$ for $n>2$.

Let $h(n,m)$ be the analogous minimum worst-case number of probes for implicit
search, and define
$$
        H_q(n)=\max\{m:h(n,m)\le q\}.
$$
In both definitions, the maximum is understood to be $\infty$ if the
corresponding set of admissible universe sizes is unbounded.
A search scheme answers membership if we interpret every output other than
``not present'' as acceptance, so
$$
        H_q(n)\le G_q(n).
$$
Sorting the table and applying binary search solves both problems within
$\lceil\log_2(n+1)\rceil$ probes, so
$$
        G_q(n)=H_q(n)=\infty
$$
once $q\ge\lceil\log_2(n+1)\rceil$.  The interesting regime is therefore
$q<\lceil\log_2(n+1)\rceil$.

\paragraph{Asymptotic conventions.}
All asymptotics are as $n\to\infty$.  Whenever $q$ is fixed, it is fixed
independently of $n$, and implicit constants may depend on $q$ unless stated
otherwise.  All logarithms without an explicit base are base $2$; $\ln$
denotes the natural logarithm when it appears.  We write
$$
        \Tow_1(x)=x,
        \qquad
        \Tow_{t+1}(x)=2^{\Tow_t(x)}.
$$

\paragraph{Ramsey notation.}
For integers $r,c,k\ge1$, let $R_c^{(r)}(k)$ be the least $n_0$ such that every
coloring
$$
        \chi:\binom{[n_0]}{r}\to[c]
$$
contains a set $H\subseteq[n_0]$ of size $k$ on which $\chi$ is constant.  That
is, all $r$-subsets of $H$ receive the same color.

\begin{lemma}[Fixed-uniformity Ramsey estimate]
\label{lem:fixed_uniformity_ramsey}
For every fixed integer $r\ge2$, there is a constant $c_r$ such that, for all
$\lambda\ge2$ and $k\ge3$,
$$
        R_\lambda^{(r)}(k)
        \le
        \Tow_r(c_rk\lambda\log\lambda).
$$
\end{lemma}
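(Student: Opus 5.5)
We prove \Cref{lem:fixed_uniformity_ramsey} by induction on $r$, using the Erd\H{o}s--Rado stepping-up bound. This is the standard estimate
$$
        R_\lambda^{(r)}(k)\le \lambda^{\binom{R_\lambda^{(r-1)}(k-1)}{r-1}}+r,
$$
or any version of the form $R_\lambda^{(r)}(k)\le \lambda^{(R_\lambda^{(r-1)}(k))^{r-1}}$ up to additive constants. It comes from the greedy tree construction. Vertices are processed in order, and each new vertex is placed according to the colors of the $(r-1)$-sets of its current branch together with the new vertex. A branch of length $N$ then has the property that the color of an $r$-set depends only on its first $r-1$ elements. Applying the $(r-1)$-uniform bound to the induced coloring on a branch of length $R^{(r-1)}_\lambda(k)$ yields a homogeneous $k$-set.

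\textbf{Base case $r=2$.} The standard multicolor bound gives $R_\lambda^{(2)}(k)\le \lambda^{\lambda k}$: run the greedy neighborhood-chain argument on the vertex set, with $\lambda k$ steps, each step keeping a $1/\lambda$ fraction of the remaining vertices. Hence
$$
        R_\lambda^{(2)}(k)\le 2^{\lambda k\log\lambda}=\Tow_2(\lambda k\log\lambda),
$$
so $c_2=1$ works. It is convenient to treat $r=1$ as a formal base as well, with $R^{(1)}_\lambda(k)\le \lambda k$, but the statement only asks for $r\ge2$.

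\textbf{Inductive step.} Assume $R^{(r-1)}_\lambda(k)\le \Tow_{r-1}(c_{r-1}k\lambda\log\lambda)=:N$. The stepping-up bound gives
$$
        R^{(r)}_\lambda(k)\le 2^{N^{r-1}\log\lambda}+r.
$$
We must show this is at most $\Tow_r(c_r k\lambda\log\lambda)=2^{\Tow_{r-1}(c_rk\lambda\log\lambda)}$. This reduces to the inequality
$$
        N^{r-1}\log\lambda+1\le \Tow_{r-1}(c_rk\lambda\log\lambda).
$$
For $r-1\ge2$, $N=2^{M}$ with $M=\Tow_{r-2}(c_{r-1}k\lambda\log\lambda)$. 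Then $N^{r-1}\log\lambda = 2^{(r-1)M+\log\log\lambda}$, and it suffices that $(r-1)M+\log\log\lambda+1\le \Tow_{r-2}(c_rk\lambda\log\lambda)$. This follows by choosing $c_r\ge (r+1)c_{r-1}$, because towers are superadditive in their top argument: $\Tow_t(ax)\ge a\Tow_t(x)$ for $a\ge1$ and $x\ge 1$. This is a routine monotonicity check. For $r=3$ we have $N=2^{c_2k\lambda\log\lambda}$ directly, and the same comparison works.

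The only real obstacle is the bookkeeping in absorbing the polynomial factor $N^{r-1}$ and the $\log\lambda$ into the top exponent through the constant $c_r$. Here we need the hypotheses $\lambda\ge2$ and $k\ge3$, which guarantee $k\lambda\log\lambda\ge 6$, so that lower-order additive terms are dominated. If the superadditivity inequality proves awkward at height $1$, we instead prove the slightly stronger inductive hypothesis $R^{(r)}_\lambda(k)\le \Tow_r(c_rk\lambda\log\lambda)/2$, which absorbs the additive $+r$ and the constants cleanly.
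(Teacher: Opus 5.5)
Your proof is correct and follows essentially the same route as the paper: induction on $r$ from the multicolor graph bound, then the Erd\H{o}s--Rado tree recurrence (which you call ``stepping-up'', a name usually reserved for the Erd\H{o}s--Hajnal lower-bound construction), then absorbing $N^{r-1}\log\lambda$ into the top of the tower by enlarging the constant; your use of $R^{(r-1)}_\lambda(k)$ instead of $R^{(r-1)}_\lambda(k-1)$ in the recurrence also avoids the paper's separate $k\le r$ case. One small correction: $\Tow_t(ax)\ge a\Tow_t(x)$ fails for $1<a<2$ (e.g.\ $\Tow_2(3/2)=2^{3/2}<3=\tfrac32\Tow_2(1)$), but it holds for $a\ge2$, which covers your choice $a=c_r/c_{r-1}\ge r+1$.
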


\begin{proof}
    We proceed by induction on $r$, proving the claim simultaneously for all
    $\lambda\ge2$ and $k\ge3$.  For $r=2$, the standard multicolor graph Ramsey
    bound gives, for a suitable absolute constant $c_2$,
    $$
        R_\lambda^{(2)}(k)
        \le
        2^{c_2k\lambda\log\lambda}
        =
        \Tow_2(c_2k\lambda\log\lambda).
    $$

    Now fix $r\ge3$ and assume the claim for $r-1$.  If $k\le r$, every $k$-set
    contains at most one $r$-subset, so $R_\lambda^{(r)}(k)=k$ and the desired
    bound is immediate.  We may therefore assume that $k\ge r+1$.  The standard
    recurrence from
    \cite{erdos1952combinatorial} gives
    $$
        R_\lambda^{(r)}(k)
        \le
        2^{c\lambda\left(R_\lambda^{(r-1)}(k-1)\right)^{r-1}}
    $$
    for an absolute constant $c$.  Since $k-1\ge3$, the induction hypothesis
    applies.  Put
    $$
        a=c_{r-1}(k-1)\lambda\log\lambda.
    $$
    Then
    $$
        R_\lambda^{(r-1)}(k-1)\le \Tow_{r-1}(a).
    $$
    Taking $c_{r-1}\ge1$, we have $a\ge\lambda$.  Since $r$ is fixed, there is
    a constant $d_r$ such that
    $$
        c\lambda\bigl(\Tow_{r-1}(a)\bigr)^{r-1}
        \le
        \Tow_{r-1}(d_ra).
    $$
    The recurrence therefore yields
    $$
        R_\lambda^{(r)}(k)
        \le
        2^{\Tow_{r-1}(d_ra)}
        =
        \Tow_r(d_ra).
    $$
    Finally, $a\le c_{r-1}k\lambda\log\lambda$.  Choosing
    $c_r\ge\max\{1,d_rc_{r-1}\}$ proves the claim and completes the induction.
\end{proof}

\section{The two-probe upper bound}
\label{sec:two-probe}

Our main result in this section is a near-quadratic upper bound on $G_2(n)$.
Before this work, no exponential upper bound, let alone a polynomial upper
bound, was known for $G_q(n)$ for any fixed $q\ge 2$. The case $q=1$ was
settled by Yao~\cite{yao1981should}, but Yao's proof proceeds by a case
analysis that does not appear to extend to two probes. Thus $q=2$ is the first
case in which one must confront adaptivity.

We now prove \Cref{thm:intro-g2}.
The starting point is a rectangle argument. Fix a deterministic scheme. If two
query-input pairs $(x,S)$ and $(x',S')$ produce the same probe transcript, then
the crossed pairs $(x,S')$ and $(x',S)$ produce the same transcript as well.
Thus each transcript class has a rectangular structure in the query-input
incidence matrix. Correctness also implies that each such rectangle is
monochromatic in every query row: the same transcript on $(x,S)$ and $(x,S')$
forces $x\in S$ if and only if $x\in S'$.  We next restrict to a large set of
queries sharing the first cell and group stored sets by the key occupying that
cell.  Within each group, queries are partitioned by their second-probe cell,
and the row-wise consistency above makes the intersection with each resulting
query class determined by the content returned from that cell.  A counting
lemma then shows that, if $m$ exceeds a sufficiently large constant times
$n^2(\log n)^2$, the groups together account for fewer stored sets than the
number of $n$-subsets the scheme must represent.

\begin{definition}[Probe transcript]
    Fix a query algorithm. Given a query $x$ and a table $T_S$, the probe
    transcript is the sequence
    $$
        \tau(x,S)=((\alpha_1(x,S),\beta_1(x,S)),\ldots,
        (\alpha_q(x,S),\beta_q(x,S))),
    $$
    where $\alpha_i(x,S)\in[n]$ is the cell probed at step $i$, and
    $$
        \beta_i(x,S)=T_S[\alpha_i(x,S)]\in[m]
    $$
    is the value read from that cell.
\end{definition}

\begin{lemma}[Rectangle argument]
    \label{lem:rectangle}
    Fix a deterministic implicit membership scheme of probe complexity $q$
    that answers membership queries correctly. Fix a probe transcript
    $$
        \tau=((\alpha_1,\beta_1),\ldots,(\alpha_q,\beta_q)).
    $$
    Suppose that the scheme produces the transcript $\tau$ on both query-input
    pairs $(x,S)$ and $(x',S')$. Then the scheme also produces the transcript
    $\tau$ on both crossed pairs $(x,S')$ and $(x',S)$.

    Consequently,
    $$
        x\in S \Longleftrightarrow x\in S',
        \qquad
        x'\in S \Longleftrightarrow x'\in S'.
    $$
\end{lemma}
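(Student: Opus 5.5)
We prove the first assertion by induction on the probe index, using only the fact that each probe address depends on the query and on the values already read. Write $\tau=((\alpha_1,\beta_1),\ldots,(\alpha_q,\beta_q))$ and consider the crossed pair $(x,S')$. We show by induction on $i\in\{1,\ldots,q\}$ that the first $i$ probes of the scheme on $(x,S')$ are $(\alpha_1,\beta_1),\ldots,(\alpha_i,\beta_i)$.

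For $i=1$, the first address is a function of $x$ alone. Since $(x,S)$ produces $\tau$, this address is $\alpha_1$. Since $(x',S')$ produces $\tau$, we have $T_{S'}[\alpha_1]=\beta_1$. Hence the first probe on $(x,S')$ is $(\alpha_1,\beta_1)$.

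For the inductive step, assume the first $i-1$ probes on $(x,S')$ agree with $\tau$. The $i$-th address is a deterministic function of $x$ and of $\beta_1,\ldots,\beta_{i-1}$. This is exactly the data the algorithm has on $(x,S)$ at step $i$, so the address equals $\alpha_i$. Because $(x',S')$ produces $\tau$, the table satisfies $T_{S'}[\alpha_i]=\beta_i$, and so the $i$-th probe is $(\alpha_i,\beta_i)$. Running the same argument with the roles of $(x,S)$ and $(x',S')$ exchanged handles the pair $(x',S)$. The one point needing care is that adaptivity enters only through the values read, so the query determines the addresses and the table determines the values. This is why crossing works.

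For the consequence, the algorithm's output is a function of the query and the full transcript. Thus $(x,S)$ and $(x,S')$ receive the same answer, because both have query $x$ and transcript $\tau$. Correctness of membership then gives $x\in S\iff x\in S'$. In the same way, $(x',S)$ and $(x',S')$ share query $x'$ and transcript $\tau$, which gives $x'\in S\iff x'\in S'$. The padding convention does not matter here: dummy probes are still deterministic probes and so are recorded in the transcript. No step is a genuine obstacle; the only thing to state precisely is that both addresses and outputs are functions of (query, values read so far).
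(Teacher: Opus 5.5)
Your proof is correct and takes the same route as the paper. Both induct over the probes to show that the crossed run on $(x,S')$ follows $\tau$: the addresses are fixed by $x$ and the values read so far, and the values are fixed because $T_{S'}[\alpha_i]=\beta_i$. Both then note that the same query with the same transcript gives the same answer, so correctness yields $x\in S\Leftrightarrow x\in S'$, and symmetrically for $x'$.
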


\begin{proof}
    Since the transcript on both $(x,S)$ and $(x',S')$ is $\tau$, the two
    tables $T_S$ and $T_{S'}$ contain the same value $\beta_i$ in cell
    $\alpha_i$ for every $i=1,\ldots,q$.

    We first show that the transcript on $(x,S')$ is also $\tau$. On query $x$,
    the algorithm probes cell $\alpha_1$, because this is what it does on the
    query-input pair $(x,S)$. The table $T_{S'}$ contains the same value
    $\beta_1$ in this cell as $T_S$ does. Therefore, after the first probe, the
    algorithm is in exactly the same state as it was on the query-input pair
    $(x,S)$. Inductively, after the first $i-1$ probes, the algorithm has seen
    the same query $x$ and the same values $\beta_1,\ldots,\beta_{i-1}$, so it
    probes the same next cell $\alpha_i$ and reads the same value $\beta_i$
    from $T_{S'}$. Thus the transcript on $(x,S')$ is $\tau$.

    Since the query $x$ and the transcript are the same on $(x,S)$ and
    $(x,S')$, the deterministic algorithm gives the same answer on these two
    inputs. Correctness gives
    $$
        x\in S \Longleftrightarrow x\in S'.
    $$
    The same argument, with $x'$ in place of $x$, shows that the transcript on
    $(x',S)$ is also $\tau$, and correctness gives
    $$
        x'\in S' \Longleftrightarrow x'\in S.
    $$
\end{proof}

\paragraph{Trace counting.}

Assume a two-probe scheme exists for universe size $m$. Since the first probe
depends only on the query, the pigeonhole principle gives a cell
$\alpha_1^*\in[n]$ and a set $Q\subseteq[m]$ with $|Q|\ge m/n$ such that every
query $x\in Q$ first probes $\alpha_1^*$. Write $m'=|Q|$, and restrict attention
to stored sets $S\in\binom{Q}{n}$.

For each $y\in Q$, define the \emph{first-cell fiber}
$$
    \mcalf_y
    :=
    \{R\subseteq Q\setminus\{y\}: |R|=n-1
    \text{ and } T_{R\cup\{y\}}[\alpha_1^*]=y\}.
$$
Equivalently, $R\in\mcalf_y$ means that the chosen first cell contains $y$ when
the stored set is $R\cup\{y\}$.  Every $S\in\binom{Q}{n}$ determines the unique
pair $y=T_S[\alpha_1^*]$ and $R=S\setminus\{y\}\in\mcalf_y$.

Conversely, every pair $(y,R)$ with $y\in Q$ and $R\in\mcalf_y$ yields the $n$-set
$S=R\cup\{y\}$, and these two operations are inverse.  Consequently,
\begin{equation}
    \label{eq:partition_f}
    \sum_{y\in Q}|\mcalf_y|=\binom{m'}{n}.
\end{equation}

Fix $y\in Q$. After the first probe has revealed $y$, the second probe on a
query $x\in Q\setminus\{y\}$ is fixed by the algorithm. This gives a partition
of $Q\setminus\{y\}$ into at most $n$ parts
$$
    \begin{aligned}
    P_j(y)
    :=
    \{x\in Q\setminus\{y\}:&
    \text{ after the first probe returns }y,\\
    &\text{the second probe on }x\text{ is cell }j\},
    \qquad j\in[n].
    \end{aligned}
$$
For $R\in\mcalf_y$, the \emph{second-probe trace} of $R$ on $P_j(y)$ is
$R\cap P_j(y)$.

We use the row-wise consequence of \Cref{lem:rectangle}: for a fixed query,
identical transcripts on two stored sets force the same membership answer.

\begin{lemma}[Second-probe trace constraint]
    \label{lem:second_trace}
    For every $y\in Q$ and every cell $j\in[n]$,
    $$
        \left|\{R\cap P_j(y):R\in\mcalf_y\}\right|\le m'.
    $$
\end{lemma}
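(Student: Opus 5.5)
The plan is to show that the map $R\mapsto R\cap P_j(y)$ on $\mcalf_y$ factors through the content of cell $j$ in the table $T_{R\cup\{y\}}$. That content is an element of $R\cup\{y\}\subseteq Q$, so it takes at most $m'$ values, and the bound follows.

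Fix $y\in Q$ and $j\in[n]$. For $R\in\mcalf_y$, write $S_R=R\cup\{y\}$ and $v(R)=T_{S_R}[j]\in Q$. The claim is that if $R,R'\in\mcalf_y$ satisfy $v(R)=v(R')$, then $R\cap P_j(y)=R'\cap P_j(y)$.

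To prove the claim, take any $x\in P_j(y)$. Since $x\in Q$, the first probe on $x$ is at cell $\alpha_1^*$. Because $R,R'\in\mcalf_y$, both $T_{S_R}$ and $T_{S_{R'}}$ contain $y$ in that cell. After reading $y$, the algorithm on query $x$ probes cell $j$, by the definition of $P_j(y)$. There it reads $v(R)=v(R')$ in both tables. So the transcripts $\tau(x,S_R)$ and $\tau(x,S_{R'})$ coincide; with exactly two probes there are no further steps. By \Cref{lem:rectangle} (its row-wise consequence with $x'=x$), $x\in S_R$ if and only if $x\in S_{R'}$. Since $x\neq y$, this says $x\in R$ if and only if $x\in R'$. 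As $x\in P_j(y)$ was arbitrary, the traces agree.

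Hence the number of distinct traces is at most the number of distinct values of $v$, which is at most $|Q|=m'$. There is no real obstacle here. The only points needing care are that $x\ne y$, which holds because $P_j(y)\subseteq Q\setminus\{y\}$, and that the padding to exactly two probes and the no-reprobe convention keep the transcript well-defined. In fact the bound can be sharpened to $m'$ minus one or so, but $m'$ suffices.
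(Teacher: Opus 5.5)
Your proposal is correct and is essentially the paper's own proof. Both show that, for fixed $y$ and $j$, the trace $R\cap P_j(y)$ is determined by the single value $T_{R\cup\{y\}}[j]\in Q$: every $x\in P_j(y)$ gets the same two-probe transcript on both tables, the row-wise consequence of \Cref{lem:rectangle} then applies, and $x\neq y$ finishes the step.
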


\begin{proof}
    Suppose $R,R'\in\mcalf_y$ and the two corresponding tables have the same
    content in cell $j$:
    $$
        T_{R\cup\{y\}}[j]=T_{R'\cup\{y\}}[j].
    $$
    For every query $x\in P_j(y)$, both tables give the same two-probe
    transcript: the first cell is $\alpha_1^*$ with value $y$, and the second
    cell is $j$ with the same value in both tables. By \Cref{lem:rectangle},
    correctness forces
    $$
        x\in R\cup\{y\}
        \Longleftrightarrow
        x\in R'\cup\{y\}.
    $$
    Since $x\ne y$, this is equivalent to $x\in R\Longleftrightarrow x\in R'$.
    Hence $R\cap P_j(y)=R'\cap P_j(y)$ whenever the content of cell $j$ agrees.
    Thus, for fixed $y$ and $j$, the trace $R\cap P_j(y)$ is determined by the
    value $T_{R\cup\{y\}}[j]$.  This value lies in $Q$, so there are at most
    $m'$ possible traces.
\end{proof}

The following combinatorial lemma is the heart of the two-probe upper bound. It
says that an $(n-1)$-uniform family cannot be dense if, on each part of a
partition into at most $n$ pieces, it admits only $m'$ possible traces.

\begin{lemma}[Weighted trace count]
\label{lem:weightedtrace}
There is an absolute constant $\kappa>0$ such that the following holds for all
sufficiently large integers $n$. Let $U$ be a set of size $m'-1$, partitioned
into at most $n$ parts. Let
$$
        \mcalf\subseteq\binom{U}{n-1},
$$
and suppose that for every part $P$,
$$
        \left|\{F\cap P:F\in\mcalf\}\right|\le m'.
$$
If $m'\ge \kappa n(\log n)^2$, then
$$
        |\mcalf|\le n^{-3}\binom{m'-1}{n-1}.
$$
\end{lemma}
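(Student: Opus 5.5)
The plan is to pass from counting to probability, and then use the fact that, after a change of measure, the parts become independent. Write $N=m'-1=|U|$, let $P_1,\ldots,P_k$ ($k\le n$) be the parts with $p_i=|P_i|$, and let $\mathcal{T}_i=\{F\cap P_i:F\in\mcalf\}$. By hypothesis $|\mathcal{T}_i|\le m'$. Every $F\in\mcalf$ satisfies $F\cap P_i\in\mathcal{T}_i$ for all $i$. Hence $|\mcalf|/\binom{N}{n-1}$ is at most the probability that a uniformly random $(n-1)$-subset $F$ of $U$ has every trace in $\mathcal{T}_i$.

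To get independence, I will replace the uniform $(n-1)$-set by a $\delta$-random subset $X\subseteq U$ with $\delta=(n-1)/N$. Conditioned on $|X|=n-1$, the set $X$ is uniform. The event $|X|=n-1$ has probability $\Omega(n^{-1/2})$, so
$$
\frac{|\mcalf|}{\binom{N}{n-1}}\le O(\sqrt n)\,\Pr\bigl[\forall i:\ X\cap P_i\in\mathcal{T}_i\bigr]
= O(\sqrt n)\prod_{i=1}^k \Pr[X\cap P_i\in\mathcal{T}_i].
$$
For a single part, $\Pr[X\cap P_i\in\mathcal{T}_i]$ is at most the total mass of the $m'$ most likely subsets of $P_i$. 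Since $\delta<1/2$, these are subsets of smallest size. Let $t_i$ be the least $t$ with $\sum_{s\le t}\binom{p_i}{s}>m'$. Then $\Pr[X\cap P_i\in\mathcal{T}_i]\le 1-\pi_i$, where $\pi_i:=\Pr[\mathrm{Bin}(p_i,\delta)>t_i]$. It therefore suffices to show $\sum_i\pi_i\ge 4\ln n$, since then the product is at most $e^{-\sum\pi_i}\le n^{-4}$ and the $O(\sqrt n)$ loss is absorbed.

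Call a part \emph{small} if $2^{p_i}\le m'$; such parts contribute $\pi_i=0$. Their total size is at most $n\log m'=O(n\log n)$, which is a negligible fraction of $N\ge\kappa n(\log n)^2$. So the large parts carry total size at least $N/2$. For a large part, the threshold satisfies $t_i\approx\log m'/\log(p_i/t_i)$. I will lower-bound $\pi_i\ge\Pr[\mathrm{Bin}(p_i,\delta)=t_i+1]$ using the standard estimate $\binom{p}{s}\delta^s(1-\delta)^{p-s}\ge(p\delta/s)^s e^{-2p\delta}/\mathrm{poly}$.

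The heart of the matter, and the step I expect to be the main obstacle, is minimizing $\sum_i\pi_i$ over all partitions with $k\le n$ and $\sum p_i\ge N/2$. I would split into two regimes. In the first, some part has mean $\mu_i=p_i\delta\ge C\log n$; there $t_i=O(\log n/\log\log n)$ lies far below the mean, so $\pi_i\approx 1$. Such parts contribute about $1$ each, and their sizes would have to be balanced against the rest. The worst case should be near-equal parts of size $p\approx N/n\approx\kappa(\log n)^2$ with mean $\mu\approx 1$. Then $t\approx\log n/\log(\kappa\log n)$ and $\pi\gtrsim e^{-1}/(t+1)!\approx\exp\bigl(-\log n\cdot\frac{\log t}{\log(\kappa\log n)}\bigr)$. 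Since $\log t\approx\log\log n-\log\log\log n$, while $\log(\kappa\log n)=\log\log n+\log\kappa$, this is $n^{-1+\varepsilon}$ for some $\varepsilon>0$ once $\kappa$ is large, giving $\sum\pi_i\ge n^{\varepsilon}\gg\ln n$. This calculation is exactly where $(\log n)^2$ and a large constant $\kappa$ are needed.

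To make the reduction to the balanced case rigorous, I would treat $\pi$ as a function $g(p)$ of the part size alone, since $\delta$ is fixed. I would then show that $g(p)/p$ is essentially nondecreasing in $p$ beyond the small-part cutoff, up to constants, and also use $g\le 1$. Together with $k\le n$, these give $\sum_i g(p_i)\ge\min\{\tfrac{N}{2}\cdot g(N/n)/(N/n),\ \Omega(1)\cdot\#\{\text{huge parts}\}\}$, and both branches exceed $4\ln n$. Verifying this near-monotonicity of $g(p)/p$, uniformly in $n$, is the delicate estimate.
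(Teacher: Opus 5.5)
The early steps are the same as the paper's: the product measure with $\delta=(n-1)/N$, the $O(\sqrt n)$ loss from conditioning on $|X|=n-1$, and independence across parts. Bounding a part by the mass of its $m'$ most likely traces is also the same, and your balanced-case computation is essentially the paper's per-piece estimate. The gap is the passage from an arbitrary partition to the balanced one. You leave it open, and the criterion you reduce to is in fact false. Since each $\pi_i\le 1$, any partition into fewer than $4\ln n$ parts has $\sum_i\pi_i<4\ln n$. The trivial partition $\{U\}$ is allowed by the hypothesis; it arises in the application when every query makes the same second probe. There $\sum_i\pi_i\le 1$, even though the lemma holds outright because $|\mcalf|\le m'$. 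The step $1-\pi_i\le e^{-\pi_i}$ discards the fact that $1-\pi_i$ is astronomically small for a part whose mean $p_i\delta$ is far above $t_i$. So your second branch, $\Omega(1)\cdot\#\{\text{huge parts}\}$, need not exceed $4\ln n$. Near-monotonicity of $g(p)/p$ cannot rescue it either, because $g\le 1$ forces $g(p)/p$ to decay once $g$ saturates. Salvaging your route would mean working with $-\ln(1-g(p))$, which grows roughly linearly in $p$ past saturation, and proving a monotonicity statement for that while controlling the jumps of $t(p)$.

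The idea you are missing, which makes the optimization over partitions unnecessary, is that you may refine the partition. If $H\subseteq P$, then $F\cap H$ is determined by $F\cap P$, so every subpiece still has at most $m'$ traces. Cut each part into pieces of size between $\ell/2$ and $\ell$, with $\ell=N/n$, plus at most one leftover of size below $\ell/2$ per part. The leftovers total at most $n\ell/2=N/2$, so at least $n/2$ good pieces remain. Each has mean $|H|\delta\in[(n-1)/(2n),(n-1)/n]$, and huge parts simply become many factors instead of one. On a good piece, choose $s\approx\log n/\log\log n$ with $\Pr[|X\cap H|=s+1]\ge 100\log n/n$. Traces of size above $s$ have total mass at most $m'\delta^{s+1}=o(1/n)$, so each good piece contributes a factor at most $1-99\log n/n$. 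The product over the good pieces is then at most $n^{-40}$, which absorbs the $O(\sqrt n)$ loss.

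Two smaller corrections. In the balanced case the gain is not $n^{\varepsilon}$ for a fixed $\varepsilon$: your own formula gives $\pi\approx n^{-1+\Theta(\log\log\log n/\log\log n)}$ for fixed $\kappa$, since the $\log\kappa$ term is divided by $\log\log n$. That is still $\gg(\ln n)/n$, so the conclusion survives. Also, $n\log m'=O(n\log n)$ requires $m'$ polynomial in $n$, though $n\log m'=o(N)$ holds regardless.
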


\begin{proof}
    Fix any constant $\kappa\ge1$; all estimates below are for sufficiently
    large $n$.  Put $u=m'-1$, $p=(n-1)/u$, and $\ell=u/n$. Sample a random set
    $B\subseteq U$ by including each element independently with probability
    $p$.

    We first refine the partition into pieces of controlled size. Split each original part $P$ into pieces of size
    between $\ell/2$ and $\ell$, except possibly for one leftover piece of size
    less than $\ell/2$. Call the non-leftover pieces good. The total size of
    the leftover pieces is at most $n\ell/2=u/2$, so the good pieces contain at
    least $u/2$ elements. Since each good piece has size at most $\ell$, there
    are at least $n/2$ good pieces.

    The trace bound survives refinement: if $H\subseteq P$ and $P$ is an
    original part, then
    $$
        \{F\cap H:F\in\mcalf\}
        =
        \{A\cap H:A\in\{F\cap P:F\in\mcalf\}\}.
    $$
    Thus every refined piece has at most $m'$ possible traces.

    Let $\Tr_H(\mcalf)=\{F\cap H:F\in\mcalf\}$.  If $B\in\mcalf$, then
    $B\cap H\in\Tr_H(\mcalf)$ for every refined piece $H$.  The random sets
    $B\cap H$ are independent over distinct pieces, so the corresponding events
    are independent.  Therefore
    \begin{equation}
        \label{eq:trace_independence}
        \Pr[B\in\mcalf]
        \le
        \prod_H \Pr[B\cap H\in\Tr_H(\mcalf)] .
    \end{equation}

    Fix a good piece $H$.  Since $\ell/2\le |H|\le\ell$ and
    $p=(n-1)/u$,
    $$
        \frac{n-1}{2n}
        \le
        |H|p
        \le
        \frac{n-1}{n}.
    $$
    Moreover, $p\le1/2$ for all sufficiently large $n$.  The inequality
    $\ln(1-p)\ge-2p$ then gives
    $$
        (1-p)^{|H|}
        \ge
        \exp(-2p|H|)
        \ge
        e^{-2}.
    $$
    Consequently, for every nonnegative integer $t$ with $t+1\le |H|$,
    $$
    \begin{aligned}
        \Pr[|B\cap H|=t+1]
        &=
        \binom{|H|}{t+1}p^{t+1}(1-p)^{|H|-t-1}\\
        &\ge
        \left(\frac{|H|p}{t+1}\right)^{t+1}(1-p)^{|H|}\\
        &\ge
        (c_*(t+1))^{-(t+1)}
    \end{aligned}
    $$
    for a suitable absolute constant $c_*>1$.

    Fix such a constant $c_*$, and for all sufficiently large $n$ let $s$ be
    the largest nonnegative integer satisfying
    $$
        (c_*(s+1))^{s+1}\le \frac{n}{100\log n}.
    $$
    Taking logarithms and using the maximality of $s$ gives
    $$
        s=\frac{\log n}{\log\log n-\log\log\log n+O(1)}.
    $$
    Since $m'\ge\kappa n(\log n)^2$, we have
    $\ell\ge(\kappa/2)(\log n)^2$ for all large $n$, and hence
    $$
        \frac{s^2}{\ell}
        =
        O\!\left(\frac{1}{(\log\log n)^2}\right)
        =o(1).
    $$
    In particular, $s+1\le |H|$ for all large $n$.  Applying the preceding
    point-probability estimate with $t=s$ gives
    $$
        \Pr[|B\cap H|=s+1]
        \ge
        (c_*(s+1))^{-(s+1)}
        \ge
        \frac{100\log n}{n}.
    $$
    Therefore
    $$
        \Pr[|B\cap H|\le s]\le 1-\frac{100\log n}{n}.
    $$

    We now split the allowed traces according to their size. Traces of size at
    most $s$ contribute at most the last probability. Each trace of size at
    least $s+1$ has probability at most $p^{s+1}$, and there are at most $m'$
    allowed traces. Hence
    $$
        \Pr[B\cap H\in\Tr_H(\mcalf)]
        \le
        1-\frac{100\log n}{n}+m'p^{s+1}.
    $$
    Since $m'p\le2n$ and
    $p\le 2/(\kappa(\log n)^2)$ for all large $n$,
    $$
        m'p^{s+1}
        \le
        2n\left(\frac{2}{\kappa(\log n)^2}\right)^s.
    $$
    Taking logarithms and using the displayed estimate for $s$, the logarithm
    of the right-hand side is
    $$
        1+\log n+s\bigl(\log(2/\kappa)-2\log\log n\bigr)
        =
        -\log n
        -\Omega\!\left(
            \frac{\log n\,\log\log\log n}{\log\log n}
        \right).
    $$
    Therefore
    $$
        m'p^{s+1}=o(1/n).
    $$

    Thus every good piece contributes the loss
    $$
        \Pr[B\cap H\in\Tr_H(\mcalf)]
        \le
        1-\frac{99\log n}{n}.
    $$

    By \Cref{eq:trace_independence}, multiplying over the at least $n/2$ good pieces gives
    $$
        \Pr[B\in\mcalf]
        \le
        \left(1-\frac{99\log n}{n}\right)^{n/2}
        \le n^{-40}
    $$
    for all sufficiently large $n$.

    Finally let $K_0$ be a uniformly random set from $\binom{U}{n-1}$. Since
    every member of $\mcalf$ has size exactly $n-1$,
    $$
        \Pr[B\in\mcalf]
        =
        \Pr[|B|=n-1]\Pr[K_0\in\mcalf].
    $$
    Write $\mu=up=n-1$. We need only the elementary central binomial-layer
    estimate, which follows directly from Stirling's formula: for an absolute
    constant $c_0>0$,
    $$
    \begin{aligned}
        \Pr[|B|=\mu]
        &=
        \binom{u}{\mu}p^\mu(1-p)^{u-\mu}  \\
        &\ge
        c_0
        \sqrt{\frac{u}{\mu(u-\mu)}}
        \left(\frac{u}{\mu}\right)^\mu
        \left(\frac{u}{u-\mu}\right)^{u-\mu}
        p^\mu(1-p)^{u-\mu}  \\
        &=
        \frac{c_0}{\sqrt{up(1-p)}}.
    \end{aligned}
    $$
    Since $up(1-p)=\Theta(n)$, this gives
    $$
        \Pr[|B|=n-1]\ge c n^{-1/2}
    $$
    for an absolute constant $c>0$. Consequently
    $$
        \Pr[K_0\in\mcalf]\le c^{-1} n^{1/2}n^{-40}\le n^{-3}
    $$
    for all large $n$, which is the desired bound.
\end{proof}

\begin{proof}[Proof of \Cref{thm:intro-g2}]
    Let $\kappa>0$ be the absolute constant from \Cref{lem:weightedtrace}.
    Assume, toward a contradiction, that a two-probe scheme exists for universe
    size
    $$
        m\ge \kappa n^2(\log n)^2 .
    $$
    Form the common first-probe class $Q$ as above. Then
    $m'=|Q|\ge m/n\ge \kappa n(\log n)^2$.

    For every $y\in Q$, apply \Cref{lem:weightedtrace} to
    $$
        \mcalf_y\subseteq\binom{Q\setminus\{y\}}{n-1}
    $$
    with the partition $P_1(y),\ldots,P_n(y)$. By \Cref{lem:second_trace},
    $$
        |\mcalf_y|\le n^{-3}\binom{m'-1}{n-1}.
    $$
    Summing over $y\in Q$ gives
    $$
        \sum_{y\in Q}|\mcalf_y|
        \le
        m'n^{-3}\binom{m'-1}{n-1}
        =
        n^{-2}\binom{m'}{n}
        <
        \binom{m'}{n}.
    $$
    The equality uses
    $m'\binom{m'-1}{n-1}=n\binom{m'}{n}$.
    This contradicts \Cref{eq:partition_f}, which counts every $n$-subset of
    $Q$ exactly once by its first-cell key and residual set. Therefore
    $m<\kappa n^2(\log n)^2$ for all sufficiently large $n$, which proves the
    claimed $O(n^2(\log n)^2)$ bound.
\end{proof}

\section{\texorpdfstring{The fixed-$q$ upper bound}{The fixed-q upper bound}}
\label{sec:general-q}

\paragraph{Yao's Ramsey upper bound.}
Yao's upper bound on the universe threshold comes from a Ramsey argument.
Given a storage scheme, color each set $S=\{s_1<\cdots<s_n\}$ by the permutation
that records the order in which the ranks $s_1,\ldots,s_n$ appear in the table. If
the universe is large enough, Ramsey's theorem gives a set $H$ of size $2n-1$
on which this coloring is constant. On $H$, every stored $n$-set has the same
rank order in the table, so the scheme behaves like a fixed permuted sorted
table. Yao's adversary argument for sorted tables then forces
$\lceil\log_2(n+1)\rceil$ probes. Equivalently,
$$
        G_q(n)\le R_{n!}^{(n)}(2n-1)-1<\Tow_n(n^{O(n)})
$$
whenever $q<\lceil\log_2(n+1)\rceil$.

\paragraph{Improved upper bounds on $G_q(n)$.}
The preceding argument colors whole $n$-sets, although a $q$-probe query sees
only $q$ table entries.  A crude coloring of the probe patterns on $(q+1)$-sets already gives an upper
bound of tower height $q+1$; see page~\pageref{pg:fixed-q_overview} for
discussion.

In this section, we prove the following quantitative form of \Cref{thm:intro-fixedq}: for every
fixed integer $q\ge3$, there is a constant $c_q$ such that, for all
sufficiently large $n$,
$$
        G_q(n)
        \le
        \Tow_{q-1}\!\left(
            n\exp\!\left(c_q\sqrt{\log n\,\log\log n}\right)
        \right).
$$
In particular,
$$
        G_q(n)\le \Tow_{q-1}\!\left(n^{1+o(1)}\right).
$$

\paragraph{Proof overview.}
Below we explain in more details how our proof to \Cref{thm:intro-fixedq} works.
The same as in the two-probe bound, we first restrict to a large set of queries that share the same first-probe
address. We next find a smaller universe on which the first
$q-1$ probes stay within a small fixed set of cells. To describe the coloring that achieves this, view the algorithm for
each query $x$ as a decision tree. Each internal node specifies a cell to
probe, and each outgoing edge specifies the key returned by that probe.
Thus a sequence of replies determines a node and the next cell to be read.

Partition the table cells into blocks, whose size we will choose below.
For a $(q-1)$-set $W$, consider the decision tree for every query $x\in W$.
For each ordered sequence of distinct replies from $W\setminus\{x\}$,
record the block containing the next cell probed. We do this only for
sequences of length $1,\ldots,q-2$, so the recorded nodes correspond to
probes $2,\ldots,q-1$. Collect all these block labels into a vector,
indexing its entries by the ranks in $W$ of the query and the ordered
replies. This vector is the \emph{color} of $W$.

For example, when $q=4$ and $W=\{a,b,c\}$, the tree for query $a$ contributes
four labels: the blocks probed after replies $b$, $c$, $(b,c)$, and $(c,b)$.
The trees for queries $b$ and $c$ each contribute four more. The color of
$W$ therefore has $12$ entries. In general, we write $a_q$ for the number
of entries; this number depends only on $q$.

Suppose Ramsey's theorem gives a universe $U$ on which all $(q-1)$-sets
have the same color. The common vector mentions at most $a_q$ blocks.
Every execution with query and stored keys in $U$ must use these blocks
for probes $2,\ldots,q-1$, as long as it has not yet read the query key.
Indeed, before any such probe, the query and the replies seen so far can
be included in a $(q-1)$-subset of $U$, and the next block is recorded in
that subset's color. Let $J$ consist of these blocks together with the
common first cell.

Now fix the contents of $J$. For every query outside the set of fixed
keys, the first $q-1$ probes and their replies are determined, so the
address of the last probe is determined as well. Partition these queries
by their last-probe address. On each part, the content of that one cell
determines all membership answers, giving the same trace bound as in
\Cref{sec:two-probe}. The counting lemma below then bounds the density
of each residual family. Larger blocks give fewer possible colors, but
also leave more cells in $J$ whose contents must be fixed. We choose the
block size so that the density bound is still small enough after summing
over all possible contents of $J$.

The counting step requires a version of \Cref{lem:weightedtrace} for
residual sets of size between $n/2$ and $n$, since we now fix several
cells. Its proof uses the same probabilistic argument and is deferred to
\Cref{app:parameterized_trace}.

\begin{lemma}[Trace count for large universes]
\label{lem:parameterized_trace}
There is an absolute constant $c>0$ such that the following holds for all
sufficiently large $n$. Put
$$
        \rho=\exp\!\left(c\sqrt{\log n\,\log\log n}\right).
$$
Let $U$ be a set of size $u\ge n\rho$, partitioned into at most $n$ parts, and let
$$
        \mcalf\subseteq\binom{U}{k},
        \qquad
        \frac n2\le k\le n.
$$
If, for every part $P$,
$$
        \left|\{F\cap P:F\in\mcalf\}\right|\le 2u,
$$
then
$$
        |\mcalf|
        \le
        \exp(-n/\rho)\binom{u}{k}.
$$
\end{lemma}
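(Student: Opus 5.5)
The plan is to rerun the proof of \Cref{lem:weightedtrace} with the parameters rebalanced. Put $p=k/u$ and let $B\subseteq U$ contain each element independently with probability $p$. Since $k\le n$ and $u\ge n\rho$, we have $p\le 1/\rho$, which is tiny.

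\textbf{Refining the partition.} Set $\ell=u/n\ge\rho$. Cut each original part into pieces of size between $\ell/2$ and $\ell$, plus at most one leftover piece of size less than $\ell/2$. As before, the leftovers have total size at most $u/2$, so there are at least $n/2$ good pieces. The trace bound $2u$ passes to every sub-piece. Independence of $B\cap H$ across distinct pieces gives
$$
\Pr[B\in\mcalf]\le\prod_H\Pr[B\cap H\in\Tr_H(\mcalf)].
$$

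\textbf{Per-piece loss.} Fix a good piece $H$. Then $|H|p\in[k/(2n),k/n]\subseteq[1/4,1]$, and $(1-p)^{|H|}\ge e^{-2}$. The same binomial estimate as before gives, for a suitable absolute constant $c_*>1$ and every $t$ with $t+1\le|H|$,
$$
\Pr[|B\cap H|=t+1]\ge (c_*(t+1))^{-(t+1)}.
$$
Split the allowed traces by size. Those of size at most $s$ have total probability at most $1-(c_*(s+1))^{-(s+1)}$. There are at most $2u$ traces of size at least $s+1$, each of probability at most $p^{s+1}$, so together they contribute at most
$$
2up^{s+1}=2kp^{s}\le 2n\rho^{-s}.
$$

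\textbf{Choice of $s$ (the main obstacle).} Take $s$ to be the largest integer with
$$
(c_*(s+1))^{s+1}\le\rho/16.
$$
Since $\log\rho=c\sqrt{\log n\log\log n}$, this gives $s\log s\approx\log\rho$, hence $s=\Theta_c\bigl(\sqrt{\log n/\log\log n}\bigr)$, roughly $2c\sqrt{\log n/\log\log n}$. Then
$$
s\log\rho\approx 2c^2\log n.
$$
For $c$ a large enough absolute constant this makes $2n\rho^{-s}\le n^{-2}$, which is much smaller than $8/\rho$. We also need $s+1\le\ell/2\le|H|$, which holds because $s$ is polylogarithmic while $\ell\ge\rho$. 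Since $(c_*(s+1))^{-(s+1)}\ge16/\rho$, each good piece then satisfies
$$
\Pr[B\cap H\in\Tr_H(\mcalf)]\le1-\frac{8}{\rho}.
$$
Multiplying over at least $n/2$ good pieces gives $\Pr[B\in\mcalf]\le\exp(-4n/\rho)$. The delicate part is only this bookkeeping of constants, so that the large-trace term $n\rho^{-s}$ is genuinely dominated by the small-trace gap $\rho^{-1}$. This is exactly what forces $\rho$ to be of order $\exp(\Theta(\sqrt{\log n\log\log n}))$.

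\textbf{Conditioning on $|B|=k$.} As in \Cref{lem:weightedtrace}, write
$$
\Pr[B\in\mcalf]=\Pr[|B|=k]\,\Pr[K_0\in\mcalf],
$$
where $K_0$ is uniform in $\binom{U}{k}$. The Stirling bound gives $\Pr[|B|=k]\ge c_0/\sqrt{up(1-p)}=\Omega(n^{-1/2})$, since $up=k$. Therefore
$$
\Pr[K_0\in\mcalf]\le O(\sqrt n)\exp(-4n/\rho)\le\exp(-n/\rho),
$$
using $n/\rho\gg\log n$. This is the claimed bound.
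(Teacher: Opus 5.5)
Your proposal is correct and follows essentially the paper's argument: refining into at least $n/2$ good pieces, using independence of $B\cap H$ across pieces, cutting the traces at size $s$, comparing the point probability at $s+1$ with the term $2u\,p^{s+1}$ for large traces, and finally conditioning on $|B|=k$. The only difference is bookkeeping: you tie $s$ to $\rho$ through $(c_*(s+1))^{s+1}\le\rho/16$, whereas the paper fixes $s=\lfloor\sqrt{\log n/\log\log n}\rfloor$ and chooses $c$ so that $\rho\ge(4nb_s)^{1/s}$ and $\rho\ge 8b_s$ with $b_s=d^s(s+1)!$.
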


\begin{proof}[Proof of \Cref{thm:intro-fixedq}]
    Fix a constant $q\ge3$, and suppose a $q$-probe scheme exists on universe
    $[m]$. We may assume that the query algorithm makes exactly $q$ probes
    and never probes the same cell twice in one query. Since the first
    probe depends only on the query, there is a cell $j_*$ and a query
    class $C\subseteq[m]$ with $|C|\ge m/n$ such that every query in $C$
    first probes $j_*$.

    We first define the coloring for a partition of the table into $\nu$
    blocks. We will choose the block size after counting the entries in
    a color. Write $\operatorname{blk}(j)$ for the label of the block
    containing cell $j$, and fix a set
    $W=\{w_1<\cdots<w_{q-1}\}\subseteq C$.
    A node just before probe $t$ is specified by the query and the
    ordered sequence of the first $t-1$ replies. For each
    $2\le t\le q-1$, choose distinct indices
    $i_0,i_1,\ldots,i_{t-1}\in[q-1]$, representing the query $w_{i_0}$
    and the replies $w_{i_1},\ldots,w_{i_{t-1}}$. Let
    $\alpha_t(w_{i_0};w_{i_1},\ldots,w_{i_{t-1}})$ be the next cell
    probed after these replies; set $\alpha_t=1$ if this prefix cannot
    occur. The color of $W$ has one coordinate for each such choice,
    with value
    $$
        \operatorname{blk}\!\left(\alpha_t(w_{i_0};w_{i_1},\ldots,w_{i_{t-1}})\right).
    $$
    Order the coordinates by $t$ and then lexicographically by
    $(i_0,\ldots,i_{t-1})$. Thus two sets have the same color precisely
    when the block labels agree for every query rank and ordered
    sequence of reply ranks.

    There are $q-1$ choices for the query, $q-2$ for the first reply,
    and so on, down to $q-t$ for the last reply. Hence the number of
    coordinates recording probe $t$ is
    $(q-1)_t=(q-1)(q-2)\cdots(q-t)$, and the total number is
    $$
        a_q=\sum_{t=2}^{q-1}(q-1)_t.
    $$
    Since each coordinate is one of $\nu$ block labels, there are at
    most $\lambda=\nu^{a_q}$ colors. Each color mentions at most $a_q$
    distinct blocks.

    We now choose the universe size and the block size. Let $\rho$ be as
    in \Cref{lem:parameterized_trace}, and put
    $$
        m'=\lceil2n\rho\rceil.
    $$
    We will seek a homogeneous universe of size $m'$. This leaves room
    to remove the keys in the fixed cells while keeping at least
    $n\rho$ keys, as required by the lemma. Note that $\rho=n^{o(1)}$.

    If each block has at most $b$ cells, the blocks mentioned in one
    color together with the first cell contain at most $1+a_qb$ cells.
    As the final count below shows, summing over their contents costs
    a factor of at most $n$ per fixed cell, whereas the trace lemma
    gives a density bound of $\exp(-n/\rho)$. We therefore choose $b$
    so that $(1+a_qb)\ln n\le n/(100\rho)$. Partition the $n$ table cells
    into blocks of size
    $$
        b=\left\lfloor\frac{n}{200a_q\rho\log n}\right\rfloor,
    $$
    except possibly for one leftover block of smaller size.
    Since $\rho\log n=n^{o(1)}$ and $a_q$ is fixed, for all sufficiently
    large $n$,
    $$
        b\ge\frac{n}{400a_q\rho\log n}\ge1.
    $$
    Thus the number of blocks satisfies
    $$
        \nu\le n/b+1\le c_q\rho\log n,
    $$
    for a constant $c_q$ depending only on $q$, and
    $$
        \lambda=\nu^{a_q}\le(c_q\rho\log n)^{a_q}.
    $$

    We will show that $|C|<R_\lambda^{(q-1)}(m')$.
    Suppose otherwise. Ramsey's theorem gives a set $U\subseteq C$
    of size $m'$ on which the coloring is constant. Let $P$ be the
    union of the blocks whose labels appear in the common color, and put
    $$
        J=P\cup\{j_*\},
        \qquad
        r=|J|.
    $$
    Since at most $a_q$ block labels appear,
    $$
        r\le1+a_qb\le1+\frac{n}{200\rho\log n}.
    $$
    For all sufficiently large $n$, this gives
    $$
        r\le\frac n{10}
        \qquad\text{and}\qquad
        r\ln n\le\frac{n}{100\rho}.
    $$
    Indeed, $r\ln n\le\ln n+n/(200\rho)$, and
    $\ln n=o(n/\rho)$ since $\rho=n^{o(1)}$.

    To see what homogeneity gives, consider any query $x\in U$ on a
    table storing a set $S\in\binom{U}{n}$. Just before probe $t$, for
    $2\le t\le q-1$, suppose the query has not yet read $x$, and let
    $y_1,\ldots,y_{t-1}$ be its replies so far. These keys are distinct
    from each other and from $x$. Complete
    $\{x,y_1,\ldots,y_{t-1}\}$ to a $(q-1)$-subset $W$ of $U$.
    The block containing the next cell is a coordinate of the color
    of $W$, so it is one of the blocks in $P$. Together with the
    common first cell, this confines the first $q-1$ probes to $J$
    until the query key is read.

    We now condition on the contents of these cells, which turns this
    confinement into a completely determined prefix transcript for each fixed
    query. For each stored set $S\in\binom{U}{n}$, define its signature on $J$ by
    $$
        \sigma(S)=T_S|_J.
    $$
    For a fixed signature $\sigma$, let $A_\sigma$ be the set of keys appearing
    in the cells of $J$, so $|A_\sigma|=r$, and define the residual family
    $$
        \mathcal{F}_\sigma
        =
        \{S\setminus A_\sigma:S\in\binom{U}{n}
        \text{ and }\sigma(S)=\sigma\}.
    $$
    Then
    $$
        \mathcal{F}_\sigma
        \subseteq
        \binom{U\setminus A_\sigma}{n-r}.
    $$

    Fix a signature $\sigma$ and a query $x\in U\setminus A_\sigma$. Among
    all tables with signature $\sigma$, the first $q-1$ probes on query $x$ are
    determined by $x$ and $\sigma$. The first probe is $j_*\in J$, so its
    content is fixed by $\sigma$. Inductively, if fewer than $q-1$ probes have
    been made and the later probes have all fallen in $P\subseteq J$, then the
    observed keys are fixed elements of $A_\sigma$ and are distinct from $x$.
    By the homogeneity property above, the next probed cell also lies in $P$,
    and its content is again fixed by $\sigma$.

    Consequently, after conditioning on $\sigma$, the last probe is a fixed
    cell depending only on $x$ and $\sigma$. Let this cell be $f_\sigma(x)$, and
    partition $U\setminus A_\sigma$ into the last-probe parts
    $$
        H_j(\sigma)=\{x\in U\setminus A_\sigma:f_\sigma(x)=j\},
        \qquad j\in[n].
    $$

    The last probe gives the trace constraint. Take
    $R,R'\in\mathcal{F}_\sigma$, and write
    $$
        S=A_\sigma\cup R,
        \qquad
        S'=A_\sigma\cup R'.
    $$
    Suppose the corresponding tables have the same content in cell $j$:
    $T_S[j]=T_{S'}[j]$. For every query $x\in H_j(\sigma)$, the first $q-1$
    probes have the same fixed transcript in both tables, and the last probe is
    cell $j$ with the same content. The full transcripts are identical, so
    correctness gives
    $$
        x\in S\Longleftrightarrow x\in S'.
    $$
    Since $x\notin A_\sigma$, this is equivalent to
    $x\in R\Longleftrightarrow x\in R'$. Thus
    $R\cap H_j(\sigma)=R'\cap H_j(\sigma)$ whenever the content of cell $j$
    agrees. There are at most $m'$ possible contents, since all stored keys
    lie in $U$, and
    $|U\setminus A_\sigma|=m'-r\ge m'/2$, so
    $$
        \left|\{R\cap H_j(\sigma):R\in\mathcal{F}_\sigma\}\right|
        \le m'
        \le 2(m'-r).
    $$

    We apply \Cref{lem:parameterized_trace} to the ground set
    $U\setminus A_\sigma$, the family $\mathcal{F}_\sigma$, and the partition
    into the parts $H_j(\sigma)$. Put
    $$
        u=m'-r,
        \qquad
        k=n-r.
    $$
    Since $r\le n/10$ and $m'=\lceil2n\rho\rceil$, we have
    $$
        \frac n2\le k\le n,
        \qquad
        u\ge2n\rho-\frac n{10}\ge n\rho.
    $$
    The trace bound above is at most $2u$, so the lemma gives
    $$
        |\mathcal{F}_\sigma|
        \le
        \exp(-n/\rho)\binom{m'-r}{n-r}.
    $$

    It remains to sum over signatures. Every $S\in\binom{U}{n}$ has a unique
    realized signature $\sigma$, and, for each fixed $\sigma$, the map
    $S\mapsto S\setminus A_\sigma$ is a bijection from the stored sets with
    signature $\sigma$ to $\mathcal{F}_\sigma$. A signature is an injective map
    from the $r$ cells in $J$ to $U$, so there are at most
    $(m')_r=m'(m'-1)\cdots(m'-r+1)$ possible signatures. Consequently,
    $$
        \binom{m'}{n}
        =
        \sum_\sigma |\mathcal{F}_\sigma|
        \le
        (m')_r\exp(-n/\rho)
        \binom{m'-r}{n-r}.
    $$
    Since
    $$
        (m')_r\binom{m'-r}{n-r}
        =
        (n)_r\binom{m'}{n}
        \le
        n^r\binom{m'}{n},
    $$
    we obtain
    $$
        1
        \le
        \exp\!\left(-\frac n\rho+r\ln n\right)
        \le
        \exp\!\left(-\frac{99n}{100\rho}\right)
        <1,
    $$
    where the second inequality uses $r\ln n\le n/(100\rho)$. This is a
    contradiction. Therefore
    $$
        |C|<R_\lambda^{(q-1)}(m').
    $$

    Since $|C|\ge m/n$, the last inequality implies
    $$
        m<nR_\lambda^{(q-1)}(m').
    $$
    By \Cref{lem:fixed_uniformity_ramsey},
    $$
        R_\lambda^{(q-1)}(m')
        \le
        \Tow_{q-1}(c_qm'\lambda\log\lambda).
    $$
    To estimate the top of the tower, note that
    $$
        \log\lambda
        \le
        a_q\bigl(O_q(1)+\log \rho+\log\log n\bigr)
        =
        O_q\!\left(\sqrt{\log n\,\log\log n}\right).
    $$
    Hence
    $$
        \lambda\log\lambda
        \le
        \exp\!\left(O_q\!\left(\sqrt{\log n\,\log\log n}\right)\right).
    $$
    Since $m'=\lceil2n\rho\rceil$, increasing $c_q$ if necessary gives
    $$
        c_qm'\lambda\log\lambda
        \le
        z,
        \qquad
        z:=n\exp\!\left(c_q\sqrt{\log n\,\log\log n}\right).
    $$
    Since $z\ge n$ and $q-1\ge2$, we obtain
    $$
        m
        <
        n\Tow_{q-1}(z)
        \le
        \Tow_{q-1}(2z).
    $$
    Increasing $c_q$ once more absorbs the factor $2$ into $z$. This proves the
    stated bound, and the $n^{1+o(1)}$ form follows from
    $$
        \exp\!\left(c_q\sqrt{\log n\,\log\log n}\right)=n^{o(1)}.
    $$
\end{proof}

\section{Extracting search schemes from membership with polynomial loss}
\label{sec:membership-search}

This section proves a coarse equivalence between implicit membership and
implicit search.  Although membership is formally weaker, we show that a
membership scheme with a constant probe bound can be restricted to a
subuniverse on which every accepting computation reads $x$.  Thus, for each
fixed $q$, membership and implicit search are equivalent up to a polynomial
loss in universe size.

If $H_q(n)=\infty$, \Cref{thm:intro-membership-search} is immediate, so assume
throughout this section that $H_q(n)<\infty$.  The binary-search observation in
the preliminaries gives $q<\lceil\log_2(n+1)\rceil\le n$, and hence
$q\le n-1$.  Fix an arbitrary deterministic $q$-probe membership scheme on
universe $[m]$ that works for all $n$-sets.  Pad the query algorithm so that
every run makes exactly $q$ probes; after an early decision, the remaining
probes access previously unprobed cells and the final answer is unchanged.
Using the transcript notation from the preliminaries, write a full transcript
as
$$
        \tau=((\alpha_1,\beta_1),\ldots,(\alpha_q,\beta_q)).
$$
We only use transcripts realized by some query-input pair, so the addresses and
values appearing in $\tau$ are distinct.

\begin{definition}[Blind-accepting transcript]
For a query $x$, say that $\tau$ is \emph{blind-accepting for $x$} if,
when the query algorithm is run on $x$ and given the replies described by
$\tau$, it probes the addresses $\alpha_1,\ldots,\alpha_q$ in order, accepts,
and does not see $x$:
$$
        x\notin\{\beta_1,\ldots,\beta_q\}.
$$
Let
$$
        B(\tau)=
        \{x\in[m]\setminus\{\beta_1,\ldots,\beta_q\}:
        \tau\text{ is blind-accepting for }x\}.
$$
If $B(\tau)\ne\emptyset$, we call $\tau$ a blind-accepting transcript.
\end{definition}

\begin{definition}[Core of a blind-accepting transcript]
For a blind-accepting transcript $\tau$, fix one representative
$b(\tau)\in B(\tau)$ and define the core of $\tau$ to be
$$
        C(\tau)=\{\beta_1,\ldots,\beta_q,b(\tau)\},
$$
which has size $q+1$.
\end{definition}

Here is why we introduce the core.  Say that a table $T_S$ realizes $\tau$ if
$T_S[\alpha_i]=\beta_i$ for every $i\in[q]$.  If $x\in B(\tau)$ and $T_S$
realizes $\tau$, then, inductively over the probes, the actual run on $(x,S)$
follows $\tau$: it probes the prescribed addresses and receives the prescribed
replies.  This run accepts, so correctness forces $x\in S$.  Hence every stored
set whose table realizes $\tau$ contains
$\{\beta_1,\ldots,\beta_q\}\cup B(\tau)$, and in particular contains
$C(\tau)$.  The representative $b(\tau)$ merely selects one fixed $(q+1)$-set
that every table realizing $\tau$ must contain.

\begin{proof}[Proof of \Cref{thm:intro-membership-search}]
Put
$$
        \ell=H_q(n)+1.
$$
Since the unique $n$-set on universe $[n]$ can be stored in increasing order
and searched without probes, $H_q(n)\ge n$.  Together with $q\le n-1$, this
gives $q+1\le n<\ell$.
Suppose, toward a contradiction, that
\begin{equation}
    \label{eq:m_bound}
        m>n^q\ell^{q+1}.
\end{equation}
In particular, $m>\ell$.  We will find a subuniverse of size $\ell$ on which
the membership scheme becomes a $q$-probe implicit search scheme.  This
contradicts the definition of $H_q(n)$, since $\ell=H_q(n)+1$.

Let $\mathcal T$ be the set of realizable blind-accepting transcripts.  For
every $\tau\in\mathcal T$, form the core $C(\tau)$.  This entire family of
cores is fixed before the random choice below.

Choose a uniformly random $\ell$-subset $Y\subseteq[m]$, and let
$$
        \zeta=
        \bigl|\{\tau\in\mathcal T:C(\tau)\subseteq Y\}\bigr|.
$$
There are at most $n^qm^q$ choices of the addresses and values, so
$|\mathcal T|\le n^qm^q$.  If $C$ is a fixed core, then
$$
        \Pr[C\subseteq Y]
        =
        \frac{\binom{m-q-1}{\ell-q-1}}{\binom{m}{\ell}}
        =
        \prod_{i=0}^{q}\frac{\ell-i}{m-i}
        \le \left(\frac{\ell}{m}\right)^{q+1}.
$$
Therefore
$$
\begin{aligned}
\E[\zeta]
&=
\sum_{\tau\in\mathcal T}\Pr[C(\tau)\subseteq Y]  \\
&\le
n^qm^q\left(\frac{\ell}{m}\right)^{q+1}  \\
&=
\frac{n^q\ell^{q+1}}{m}.
\end{aligned}
$$
By \Cref{eq:m_bound}, $\E[\zeta]<1$.  Since $\zeta$ is integer-valued, some
$\ell$-set $Y$ has $\zeta=0$.  No stored set $S\subseteq Y$ can have a table
that realizes a blind-accepting transcript $\tau$: the observation above would
give $C(\tau)\subseteq S\subseteq Y$, contradicting $\zeta=0$.  Thus selecting
one representative for each transcript suffices to exclude every
query-transcript pair $(x,\tau)$ with $x\in B(\tau)$, even if $Y$ contains some
other such $x$.

Restrict the original membership scheme to stored sets $S\in\binom{Y}{n}$ and
queries $x\in Y$.  We turn it into a search scheme as follows.  Run the padded
membership query algorithm, and remember a cell if its content is $x$.  If the
membership algorithm rejects, output ``not present''.  If it accepts, output
the remembered cell.

For $x\notin S$, correctness of the membership scheme gives rejection.  For
$x\in S$, the membership algorithm accepts.  If it did not read a cell
containing $x$, then its padded accepting transcript $\tau$ would be
blind-accepting for $x$ and would be realized by the table of $S\subseteq Y$,
contrary to the preceding paragraph.  Thus whenever $x\in S$, the algorithm
reads $x$, and the remembered cell is a correct search answer.

This is the promised $q$-probe implicit search scheme on a universe of size
$\ell=H_q(n)+1$, a contradiction.  Thus the displayed inequality is impossible,
and hence
$$
        m\le n^q\bigl(H_q(n)+1\bigr)^{q+1}.
$$
Taking the maximum over admissible $m$ proves the theorem.
\end{proof}

\phantomsection
\addcontentsline{toc}{section}{Acknowledgements}
\section*{Acknowledgements}

The author would like to thank Vikrant Ashvinkumar for helpful discussions. The author would also like to thank Periklis A. Papakonstantinou for valuable suggestions in improving the presentation of this paper.

\paragraph{Statement on AI use.}
The proofs in this paper were first generated by ChatGPT 5.5 Pro.  The initial
prompt was broad: we asked for possible improvements to Yao's implicit
membership problem, without giving mathematical hints.  ChatGPT first produced
an exponential two-probe upper bound.  After several follow-up prompts,
especially prompts asking how to extend the argument to fixed $q$ and to refine
the $G_2(n)$ bound, it produced the near-quadratic two-probe proof and the
$\Tow_{q-1}(n^{1+o(1)})$ fixed-$q$ proof developed here.  In a separate
conversation, while we were looking for an improved bound for $G_4(n)$,
ChatGPT produced the equivalence between implicit membership and implicit
search.  The authors have validated and edited the proofs.  The authors assume
responsibility for all content.

\bibliographystyle{alpha}
\bibliography{ref}

\appendix

\section{Proof of the trace count for large universes}
\label{app:parameterized_trace}

\begin{proof}[Proof of \Cref{lem:parameterized_trace}]
    We first choose the cutoff used in the counting argument. Fix a
    sufficiently large absolute constant $d\ge4e$, and put
    $$
        s=\left\lfloor\sqrt{\frac{\log n}{\log\log n}}\right\rfloor,
        \qquad
        b_s=d^s(s+1)!,
        \qquad
        \ell=\frac un.
    $$
    Stirling's formula gives
    $$
        \log b_s=O(s\log s)
        =O\!\left(\sqrt{\log n\,\log\log n}\right)
        =o(\log n),
    $$
    and
    $$
        \log\bigl((4nb_s)^{1/s}\bigr)
        =\frac{\log n+\log b_s+2}{s}
        =O\!\left(\sqrt{\log n\,\log\log n}\right).
    $$
    Thus, choosing the absolute constant $c$ in the lemma sufficiently large
    ensures that, for all sufficiently large $n$,
    $$
        \rho\ge(4nb_s)^{1/s}
        \qquad\text{and}\qquad
        \rho\ge8b_s.
    $$
    Since $u\ge n\rho$, we have $\ell\ge \rho$ and hence $\ell^s\ge4nb_s$.

    Put $p=k/u$, so $p\le1/\rho\le1/2$ for all sufficiently large $n$, and sample
    a random set $B\subseteq U$ by including each element independently with
    probability $p$. We first bound $\Pr[B\in\mcalf]$ under this product
    distribution, and then return to the uniform layer $\binom{U}{k}$.

    We begin by regularizing the parts. Split every original part into pieces
    of size between $\ell/2$ and $\ell$, except possibly for one leftover piece
    of size less than $\ell/2$. Call the non-leftover pieces good. Since there
    are at most $n$ original parts, the total size of the leftovers is at most
    $n\ell/2=u/2$. Hence the good pieces contain at least $u/2$ elements, and
    since each good piece has size at most $\ell$, there are at least $n/2$
    good pieces.

    The trace bound survives this refinement. If $H\subseteq P$, then
    $$
        \{F\cap H:F\in\mcalf\}
        =
        \{A\cap H:A\in\{F\cap P:F\in\mcalf\}\}.
    $$
    Thus every refined piece has at most $2u$ possible traces. Write
    $$
        \mcala_H=\{F\cap H:F\in\mcalf\}.
    $$

    If $B\in\mcalf$, then $B\cap H\in\mcala_H$ for every refined piece $H$.
    The random sets $B\cap H$ are independent over different pieces, so
    $$
        \Pr[B\in\mcalf]
        \le
        \prod_H \Pr[B\cap H\in\mcala_H].
    $$

    Fix a good piece $H$, and write $a=|H|$. Then $\ell/2\le a\le \ell$.
    Since $n/2\le k\le n$, the mean of $|B\cap H|$ is bounded between two
    constants:
    $$
        \frac14\le ap\le 1.
    $$
    We split the allowed traces according to their size. Traces of size at most
    $s$ contribute at most $\Pr[|B\cap H|\le s]$. Each trace of size at least
    $s+1$ has probability at most $p^{s+1}$, and there are at most $2u$ such
    traces. Hence
    $$
        \Pr[B\cap H\in\mcala_H]
        \le
        \Pr[|B\cap H|\le s]+2up^{s+1}.
    $$

    We now lower-bound the binomial layer $|B\cap H|=s+1$. The bound
    $\ell^s\ge4nb_s$, together with $s!\ge(s/e)^s$, gives
    $$
        \ell
        \ge d\bigl((s+1)!\bigr)^{1/s}
        \ge \frac{ds}{e}
        \ge 4s.
    $$
    Since $a\ge\ell/2$ and $s\ge1$, we have $a\ge2s\ge s+1$. Thus
    $$
        \binom{a}{s+1}
        \ge
        \left(\frac{a}{s+1}\right)^{s+1}
    $$
    holds. Also, $p\le1/2$ and $ap\le1$, so
    $\ln(1-p)\ge-2p$ gives $(1-p)^a\ge e^{-2}$. Therefore
    $$
        \Pr[|B\cap H|=s+1]
        =
        \binom{a}{s+1}p^{s+1}(1-p)^{a-s-1}
        \ge
        \frac{e^{-2}}{(4(s+1))^{s+1}}.
    $$
    By the standard bound $(s+1)!\ge((s+1)/e)^{s+1}$, we may choose the
    absolute constant $d$ large enough that
    $$
        \frac{e^{-2}}{(4(s+1))^{s+1}}
        \ge
        \frac{1}{d^s(s+1)!}
        =
        \frac1{b_s}
    $$
    for every $s\ge1$. Therefore
    $$
        \Pr[|B\cap H|\le s]\le 1-\frac1{b_s}.
    $$

    The contribution of the larger traces is small because $p\le 1/\ell$:
    $$
        2up^{s+1}
        \le
        2n\ell\left(\frac1\ell\right)^{s+1}
        =
        \frac{2n}{\ell^s}
        \le
        \frac1{2b_s}.
    $$
    Thus every good piece satisfies
    $$
        \Pr[B\cap H\in\mcala_H]\le 1-\frac1{2b_s}.
    $$

    Multiplying over the at least $n/2$ good pieces gives
    $$
        \Pr[B\in\mcalf]
        \le
        \left(1-\frac1{2b_s}\right)^{n/2}
        \le
        \exp\!\left(-\frac{n}{4b_s}\right).
    $$

    Finally let $K$ be a uniformly random set from $\binom{U}{k}$. Since every
    member of $\mcalf$ has size exactly $k$,
    $$
        \Pr[B\in\mcalf]
        =
        \Pr[|B|=k]\Pr[K\in\mcalf].
    $$
    Here $|B|$ is binomial with mean $k$, and
    $$
        \operatorname{Var}(|B|)
        =
        up(1-p)
        =
        k\left(1-\frac{k}{u}\right)
        \in\left[\frac n4,n\right].
    $$
    Thus the same estimate used in \Cref{lem:weightedtrace} gives
    $$
        \Pr[|B|=k]\ge c_*n^{-1/2}
    $$
    for an absolute constant $c_*>0$. Hence
    $$
        \Pr[K\in\mcalf]
        \le
        c_*^{-1}\sqrt n\exp\!\left(-\frac{n}{4b_s}\right).
    $$
    Since $K$ is uniform on $\binom{U}{k}$ and $b_s=n^{o(1)}$, we obtain,
    for all sufficiently large $n$,
    $$
        |\mcalf|
        \le
        \exp\!\left(-\frac{n}{8b_s}\right)\binom{u}{k}
        \le
        \exp(-n/\rho)\binom{u}{k}.
    $$
    The first inequality uses $\ln(c_*^{-1}\sqrt n)=O(\log n)=o(n/b_s)$
    to absorb the factor $c_*^{-1}\sqrt n$ into the exponent, and the second
    uses $\rho\ge8b_s$. This proves the lemma.
\end{proof}

\end{document}